\documentclass[journal]{IEEEtran}
\usepackage{amssymb,amsfonts,amsthm,mathtools,bm}
\usepackage{amsmath}
\usepackage{tikz}
\usetikzlibrary{matrix,decorations.pathreplacing}
\usepackage{cite}
\usepackage{array}
\usepackage[caption=false,font=normalsize,labelfont=sf,textfont=sf]{subfig}
\usepackage{textcomp}
\usepackage{verbatim}
\usepackage{graphicx,stfloats,epstopdf,url}
\usepackage{makecell}
\usepackage{soul, color, xcolor}

\usepackage{algorithm, algorithmic}
\usepackage{booktabs}
\usepackage[utf8]{inputenc}
 \usepackage{balance}
 \usepackage{enumitem}
\usepackage{tabularx}
\usepackage{multirow}
\usepackage{relsize}

\newtheorem{lemma}{Lemma}
\newtheorem{proposition}{Proposition}

\theoremstyle{remark}
\newtheorem{remark}{Remark}

\usepackage{hyperref}
\newcommand{\sn}[1]{S_N\!\left(#1\right)}

\begin{document}
\title{\huge{Joint Chirp Parameter Selection   and Low-Complexity MMSE Receiver Design for AFDM Systems }}
 \author{Ruiyuan Mao, Qu Luo, \textit{Member, IEEE},  Jianguo Li, Fabien Héliot,~\IEEEmembership{Senior Member,~IEEE},   Tianqi Mao, \textit{Member, IEEE}, Pei Xiao,~\IEEEmembership{Senior Member,~IEEE},  Hee Wook Kim,     and Kai Yang, \textit{Member, IEEE}
 \thanks{Ruiyuan Mao and Kai Yang are with the school of information and electronics, Beijing institute of technology, Beijing 100081, China (emails: mry@bit.edu.cn; yangkai@bit.edu.cn). 
 
 Qu Luo, Fabien Héliot and Pei Xiao are with the 5G/6G Innovation Centre, University of Surrey, GU2 7XH Guildford, U.K. (e-mail: \{q.u.luo, f.heliot, p.xiao\}@surrey.ac.uk). 
 
 Jianguo Li is with the School of Cyberspace Science and Technology, Beijing Institute of Technology, Beijing 100081, China (e-mail: jianguoli@bit.edu.cn). 
 
 Tianqi Mao is with State Key Laboratory of Environment
Characteristics and Effects for Near-space, Beijing Institute of Technology,
Beijing 100081, China.(e-mail: maotq@bit.edu.cn).  

Hee Wook Kim is with the Satellite Communication Research Division, Electronics and Telecommunications Research Institute (ETRI), Daejeon 34164, South Korea (email: prince304@etri.re.kr).}
 }
\maketitle

\begin{abstract}
Affine frequency division multiplexing (AFDM) has emerged as a promising waveform against doubly selective channels under high-mobility communication scenarios. 
Optimal chirp parameter ($c_1$) selection and reduced-complexity receiver design in AFDM are essential for achieving satisfactory bit error rate (BER) performance with low computational complexity.  
In this paper, we investigate the joint optimization of chirp-parameter selection and low-complexity minimum mean square error (MMSE)-based receiver design by exploiting the structural characteristics of the AFDM effective channel matrix (ECM). 
First, a simplified BER performance metric is derived by leveraging the diagonal and circulant structure of the discrete affine Fourier transformation (DAFT), based on which a fast circulant-diagonal aggregation (FCDA) algorithm is developed for efficient $c_1$ selection. 
Then, a low-complexity banded MMSE (LC-BMMSE) receiver is developed by constructing a cyclic-banded ECM through path-wise structured sparsification, where banded Cholesky factorization is employed to avoid direct matrix inversion. 
Building upon the proposed BER metric and the LC-BMMSE receiver, a hierarchical-search-based joint chirp parameter and structured sparsification (HS-JCPS) algorithm is further proposed to jointly optimize the chirp parameter and sparsification pattern under a given complexity constraint.
Simulation results demonstrate that the proposed FCDA reduces the search time for the optimal $c_1$ by an order of magnitude compared with using a BER-based criterion.
Moreover, the proposed HS-JCPS algorithm with the LC-BMMSE receiver can identify a near-optimal $c_1$, 
while attaining a superior performance-complexity tradeoff.
\end{abstract}

\begin{IEEEkeywords}
 Affine frequency division multiplexing (AFDM), doubly selective channel, chirp parameter optimization, low-complexity MMSE design.
\end{IEEEkeywords}

\section{Introduction}
\label{sec1}
\IEEEPARstart{S}{ixth}-generation wireless systems are expected to support higher data rates and enhanced reliability in emerging high-mobility scenarios, such as vehicle-to-everything (V2X), railway communications, unmanned aerial vehicle (UAV) links, and satellite communications \cite{ref01,Refadd4,ref02,AFDM_JSAC,Refadd3}. In these scenarios, the high relative mobility among transmitters, receivers, and scatterers gives rise to severe Doppler shifts and Doppler spread, while multipath propagation remains significant~\cite{ref03}. As a result, the wireless channel becomes doubly selective, varying over both time and frequency~\cite{ref04}. Such channel dynamics pose fundamental challenges to reliable signal transmission and call for waveform designs with enhanced robustness~\cite{ref02, ref04}.

Orthogonal frequency-division multiplexing (OFDM) has been widely adopted as the dominant waveform in modern wireless systems, primarily due to its ability to divide the available bandwidth into multiple orthogonal narrowband subcarriers, thereby enabling efficient multipath mitigation, flexible resource allocation, and low-complexity frequency-domain equalization~\cite{ref05,Refadd7}. However, these advantages rely critically on the preservation of subcarrier orthogonality, which becomes increasingly vulnerable in doubly selective channels. In such environments, Doppler shifts destroy subcarrier orthogonality and induce severe inter-carrier interference (ICI), resulting in substantial performance degradation and reduced transmission reliability~\cite{ref06,Refadd5}.

\subsection{Related Works}
To address the aforementioned challenges, affine frequency division multiplexing (AFDM) has been recognized as a promising waveform for highly dynamic wireless scenarios~\cite{ref06,ref07,ref08}. 
By modulating symbols on orthogonal chirp subcarriers through the inverse discrete affine Fourier transform (IDAFT), AFDM introduces two tunable chirp parameters, $c_1$ and $c_2$, which provide additional flexibility in waveform design~\cite{ref06}.
When these parameters are properly configured according to the channel delay-Doppler characteristics, AFDM yields a quasi-static and well-structured effective channel matrix (ECM) in the discrete affine Fourier transform (DAFT) domain, thereby enabling full diversity over doubly selective channels~\cite{ref06}. 
In addition, AFDM generalizes both OFDM and orthogonal chirp division multiplexing (OCDM) as special cases, offering a unified waveform framework rather than a completely separate design~\cite{ref02}. 
From an implementation perspective, its transceiver can be efficiently realized by incorporating low-complexity diagonal pre- and post-chirp operations into conventional fast Fourier transform (FFT)-based architectures. Benefiting from these attractive properties, AFDM has drawn increasing attention from both academia and industry~\cite{xu2025afdm,Refadd7,ref09,ref10,ref11,ref12,ref13,ref14,ref15,IM_AFDM_TWC,Refadd8,cao2025agile}.

To date, AFDM has been extensively investigated in a wide range of areas, including integrated sensing and communication (ISAC)~\cite{ref09,ref10,yin2025ofdm}, index modulation (IM)~\cite{IM_AFDM_TWC,sui2025generalized}, peak-to-average power ratio (PAPR) reduction~\cite{ref11,ref12}, parameter optimization~\cite{ref13}, and low-complexity receiver design~\cite{ref14,ref15}. 
Among these directions, parameter optimization and low-complexity receiver design are particularly important, since they directly affect the bit error rate (BER) performance and computational complexity of AFDM systems. 
For parameter optimization, an adaptive Agile-AFDM scheme was proposed in~\cite{cao2025agile}, where the chirp parameters are dynamically selected for each transmission block according to real-time channel conditions, thereby improving the overall system performance. 
For low-complexity receiver design, several representative approaches have been reported. In~\cite{ref14}, low-complexity equalizers were developed for AFDM over doubly dispersive channels by exploiting the structured ECM. Message-passing-based AFDM detection was investigated in~\cite{ref15} to alleviate the detection burden through iterative probabilistic inference. In~\cite{R11}, a time-domain zero-padding AFDM scheme with two-stage iterative minimum mean square error (MMSE) detection was proposed, while~\cite{R13} developed a joint sparse-graph framework for enhanced MIMO-AFDM receiver design. 
For IM-AFDM, a double-layer message-passing detector was proposed in~\cite{IM_AFDM_add1} to jointly infer the modulation symbols and index-activation patterns, thereby avoiding exhaustive maximum likelihood (ML) detection and achieving a favorable performance--complexity tradeoff. 
More recently, a deep-learning-aided expectation propagation (EP) turbo detector was developed in~\cite{DL_AFDM_add2}, where the sparse and quasi-banded DAFT-domain channel structure was exploited through block decomposition and structured matrix factorization to reduce the EP detection complexity from cubic to linear order while maintaining near-ML performance. 
More closely related to this paper, MMSE-based  receiver design has been extensively studied by exploiting the path separability and structural sparsity of the ECM~\cite{ref14,ref16,R11,di2025parameter}. 

Existing studies on MMSE-based receivers mainly focus on two aspects. 
On the one hand, the ECM structure can be tailored through chirp-parameter design to improve MMSE receiver performance \cite{di2025parameter,ref16}. For example, the authors in \cite{ref16} investigated the performance limits of MMSE detection by deriving a theoretical lower bound on the BER, and further proposed an effective parameter-selection strategy by assuming a low-Doppler regime.
On the other hand, several works attempted to sparsify the ECM and develop low-complexity MMSE receivers \cite{ref14,R11}. 
Specifically, the authors in~\cite{ref14} introduced nulling subcarriers in the DAFT domain to obtain a banded ECM at the cost of spectral efficiency, based on which a low-complexity detection was developed using LDL factorization.
In contrast to~\cite{ref14}, the authors in~\cite{R11} replaced the chirp-periodic prefix (CPP) with a zero sequence to construct a banded ECM, and developed a two-stage iterative MMSE detection algorithm.

\subsection{Motivation and Contributions}
Although existing AFDM studies have made important progress in performance enhancement \cite{di2025parameter,ref16} and complexity reduction \cite{ref06,R11,ref14}, several open issues remain when MMSE detection is considered.  
Firstly, most parameter-selection studies aimed at improving performance are based on the full ECM, while the practical complexity burden is not explicitly taken into account \cite{cao2025agile,di2025parameter,ref17}.
Although preliminary work on AFDM parameter selection for MMSE-based detection has been reported in \cite{ref16}, it implicitly assumes a low-mobility scenario.   
Moreover, since AFDM performance is jointly determined by both chirp parameter selection and receiver design, a joint optimization of these two aspects is essential for achieving an overall performance improvement.
However, existing studies typically address these two aspects separately, and their joint optimization for MMSE-based AFDM detection remains largely unexplored.

Motivated by the above observations, we investigate the joint optimization of chirp-parameter selection and low-complexity MMSE receiver design by fully exploiting the structural characteristics of the AFDM  ECM.  

The main  contributions of this paper are summarized as follows:
\begin{enumerate}
\item
We propose a fast circulant-diagonal aggregation (FCDA) algorithm for near-optimal chirp-parameter selection under a general MMSE detector. Specifically, we derive a simplified BER performance metric by leveraging the diagonal and circulant structural characteristics induced by the AFDM chirp transformation, which substantially lowers the computational complexity while still maintaining effectiveness in high-Doppler scenarios. Subsequently, based on the proposed BER metric, the FCDA algorithm is developed for efficient $c_1$ selection.

\item A sparsity-exploiting low-complexity banded MMSE (LC-BMMSE) receiver is proposed by exploiting the sparsity of the AFDM ECM. Specifically, we first construct a cyclic-banded ECM structure and then perform path-wise structured sparsification over the multipath components to explicitly control the receiver complexity. This banded ECM structure facilitates the design of the LC-BMMSE receiver. In particular, the proposed LC-BMMSE receiver leverages the cyclic band property of the sparsified ECM to compress the corresponding autocorrelation matrix into a tighter banded form. Based on this structure, banded Cholesky factorization is employed to avoid the high computational cost of direct matrix inversion.

\item Building upon the proposed FCDA and LC-BMMSE receiver, we further develop a hierarchical-search-based joint chirp parameter and structured sparsification (HS-JCPS) algorithm for joint chirp-parameter selection and structured-sparsification optimization to identify the optimal sparsification strategy under a given complexity constraint.  Specifically, a lower bound on the MMSE performance for the sparsified ECM is derived which serves as a computationally efficient BER metric, followed by the sparsification pattern design. Furthermore, to reduce the search space of both the $c_1$ values and the sparsification patterns, the proposed HS-JCPS algorithm recursively determines the chirp parameter together with the corresponding sparsification pattern.

 \item We conduct extensive simulations to validate the proposed FCDA algorithm, LC-BMMSE receiver, and HS-JCPS algorithm. The results show that the proposed FCDA can reduce the $c_1$ search time by an order of magnitude. In addition, the proposed HS-JCPS with LC-BMMSE receiver is able to identify a near-optimal $c_1$ and pruning width $\mathbf{w}$, achieving an approximately $1$ dB performance gain at a BER of $10^{-3}$.
Overall, the joint optimization of $c_1$ selection and low-complexity receiver design yields a superior performance-complexity tradeoff.
\end{enumerate}

\subsection{Organization and Notations}
The rest of this paper is organized as follows. 
Section \ref{sec2} describes the preliminary concepts of doubly selective channels and AFDM.
In Section \ref{sec3}, a fast optimal AFDM parameter selection for full ECM is presented. 
  Section IV presents the proposed LC-BMMSE
receiver along with the HS-JCPS algorithm. 
Simulation results are presented  in Section \ref{sec6}. 
Section \ref{sec7} concludes the paper.

\textit{Notations}:
In this paper, vectors and matrices are denoted by lowercase and uppercase boldface letters, respectively.
Subscripts in upright text indicate names, while those in italics indicate variables.
$(\cdot)^\mathcal{T}$, $(\cdot)^\mathcal{H}$, $(\cdot)_N$, $\lfloor\cdot\rfloor$ and $\odot$ denote the transpose, the Hermitian transpose, the modulo-$N$, the floor operation and the Hadamard product operation, respectively.
$\mathbf{I}_N$ denotes an $N\times N$-dimensional identity matrix.
$\mathbb{C}^{N\times 1}$ and $\mathbb{Z}^P_+$ denote the $N$-dimensional complex vector space and the set of positive integer vectors of dimension $P$, respectively.

\begin{figure*}[t]
	\centering
	\includegraphics[width = 2\columnwidth]{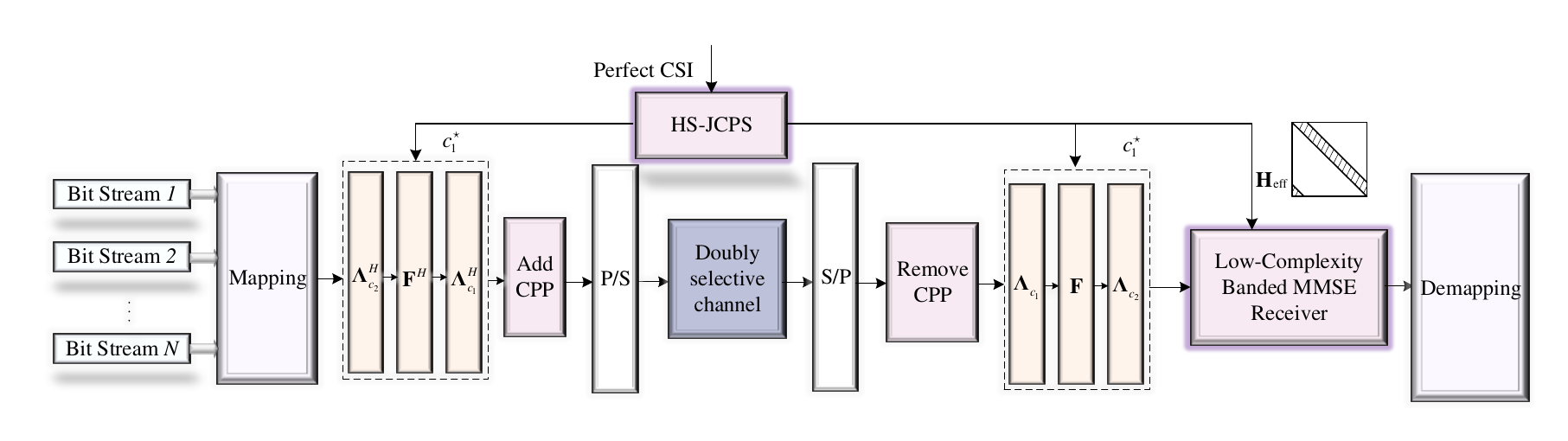}
	\caption{Block diagram of the AFDM system with the proposed HS-JCPS and low-complexity banded MMSE receiver.}
	\label{fig:systemModel}
\end{figure*}

\section{Preliminaries}
\label{sec2}
In this section, we introduce the digital implementation of the AFDM system as shown in Fig.~\ref{fig:systemModel} and establish the input-output relationship in the affine domain.
Then, we formulate the optimization problem that maximizes the MMSE performance subject to AFDM parameters.

\subsection{AFDM Modulation}
Let $\mathbf{s}=[s_1,s_2,\cdots,s_N]^\mathcal{T}\in\mathbb{C}^{N\times 1}$ denote the transmitted 
vector in the DAFT domain, where $\{s_n\}_{n=1}^N$ are quadrature amplitude modulation 
(QAM) symbols satisfying $\mathbb{E}\{\lvert s_n\rvert^2\}=\mathrm{E}_\mathrm{s}$.
The modulated time-domain symbol vector is obtained by applying the IDAFT to $\mathbf{s}$:
\begin{align}
	\begin{aligned}
		\mathbf{x}=\mathbf{A}^\mathcal{H}\mathbf{s},
		\label{eq:Mod_AFDM}
	\end{aligned}
\end{align}
where $\mathbf{A} = \boldsymbol{\Lambda}_{c_2}\mathbf{F}\boldsymbol{\Lambda}_{c_1}$ is the DAFT matrix, $c_1  $ and $c_2  $ are the AFDM parameters,
$\boldsymbol{\Lambda}_{c}=\operatorname{diag}\left\{e^{-j2\pi c n^2},n=0,1,\cdots,N-1\right\}$, and $\mathbf{F}$ is the DFT matrix with entries $\frac{1}{\sqrt{N}}e^{-j\frac{2\pi}{N}mn}$.

In AFDM, a prefix is introduced to combat inter-symbol interference caused by multipath and renders the channel matrix circulant.
Unlike OFDM, which relies on a CP, AFDM exploits the chirp periodicity of the DAFT and appends a CPP. 
With this construction, the linear convolution with the multipath channel can be equivalently represented as a circulant convolution within the AFDM symbols.

\subsection{Channel}
Consider a doubly selective channel with $P$ paths.
The time-domain channel impulse response at time $n$ and delay $l$ is given by 
\begin{align}
	\begin{aligned}
		h(n,l) = \sum_{i=1}^P h_i e^{-j\frac{2\pi}{N} \nu_i n} \delta(l-l_i),
	\end{aligned}
\end{align}
where $\delta(\cdot)$ is the Dirac delta function. 
$h_i$, $l_i$ and $\nu_i$ are the complex gain, the integer delay and digital Doppler shift of the $i$th path, respectively.
The digital Doppler shift is obtained by normalizing the Doppler frequency $f_i$ with respect to the subcarrier spacing $\Delta f$ and can be expressed as $\nu_i =\frac{f_i}{\Delta f} = \alpha_i + \beta_i$, 
where $\alpha_i \in[-\alpha_\mathrm{max},\alpha_\mathrm{max}]$ is the integer part of $\nu_i$, $\beta_i \in(-\frac{1}{2},\frac{1}{2}]$ is its fractional part , and $\alpha_\mathrm{max}$ is the maximum integer Doppler.
The time-domain channel impulse response can be represented in the matrix form as
\begin{align}
	\begin{aligned}
		\mathbf{H} = \sum_{i=1}^P h_i \boldsymbol{\Gamma}_{\mathrm{CPP}_i}\boldsymbol{\Delta}_{\nu_i} \boldsymbol{\Pi}^{l_i},
		\label{eq:DDC_AFDM}
	\end{aligned}
\end{align}
where $\boldsymbol{\Delta}_{\nu_i}=\operatorname{diag}\left\{e^{-j\frac{2\pi}{N} \nu_i n},n=0,1,\cdots,N-1\right\}$ models the Doppler of the $i$th path, $\boldsymbol{\Pi}^{l_i}$ denotes the forward cyclic-shift matrix \cite{ref06}, and $\boldsymbol{\Gamma}_{\mathrm{CPP}_i}$ is the effective CPP matrix with its expression given by 
\begin{align}\label{eq:CPP}
	\mathsmaller{\boldsymbol{\Gamma}_{\mathrm{CPP}_i}=
	\operatorname{diag}\left(\begin{cases}
		e^{\mathsmaller{-j 2 \pi  c_1\left(N^2-2N\left(l_i-n\right)\right)}}, & n<l_i \\
		1, & n \geq l_i
	\end{cases} \right).}
\end{align}
From \eqref{eq:CPP}, it follows that when $2Nc_1$ is an integer and $N$ is even, $\boldsymbol{\Gamma}_{\mathrm{CPP}_i}=\mathbf{I}_N$.

\subsection{AFDM Demodulation}
\label{secdemo}
At the receiver end, after removing the CPP, the received signal in the DAFT domain is obtained by further applying the DAFT, i.e.,  
\begin{equation}\label{eq:Demod_AFDM}
		\mathbf{y} = \mathbf{A}\mathbf{H}\mathbf{x}+\mathbf{n},
\end{equation}
where $\mathbf{n}\sim\mathcal{CN}(\bm{0}_N,\sigma_\mathrm{n}^2\mathbf{I}_N)$ denotes the complex additive white Gaussian noise vector.

Substituting \eqref{eq:Mod_AFDM} and \eqref{eq:DDC_AFDM} into \eqref{eq:Demod_AFDM}, 
the input-output relationship in the DAFT domain in the matrix form is given by
\begin{align}\label{eq:IOrelation}
	\begin{aligned}
		\mathbf{y}= & \sum_{i=1}^Ph_i\underbrace{\boldsymbol{\Lambda}_{c_2}\mathbf{F}\boldsymbol{\Lambda}_{c_1}\boldsymbol{\Delta}_{\nu_i} \boldsymbol{\Pi}^{l_i}\boldsymbol{\Lambda}_{c_1}^\mathcal{H}\mathbf{F}^\mathcal{H}\boldsymbol{\Lambda}_{c_2}^\mathcal{H}}_{\mathbf{H}_{\mathrm{eff},i}}\mathbf{s}+\mathbf{n}\\
		=&\ \mathbf{H}_{\mathrm{eff}}\ \mathbf{s}+\mathbf{n},
	\end{aligned}
\end{align}
where $\mathbf{H}_{\mathrm{eff}} \triangleq \sum_{i=1}^P h_i \mathbf{H}_{\mathrm{eff},i}$ 
is the full ECM, i.e., the matrix that retains all channel-induced entries without sparsification. 
The $(p,q)$-th entry of 
$\mathbf{H}_{\mathrm{eff},i}$ is given by \cite{ref06}
\begin{align}
	\begin{aligned}
		[\mathbf{H}_{\mathrm{eff},i}]_{p,q}=&\frac{1}{N}
		e^{j\frac{2\pi}{N}\phi_i(p,q)}\sn{p-q+\xi_i+\beta_i},\\
		& p,q\in\{0,\dots,N-1\},
		\label{eq:Hdef}
	\end{aligned}
\end{align}
where $\phi_i(p,q)=N c_1 l_i^2- q l_i + N c_2 (q^2-p^2)$, and
\begin{align}
	\begin{aligned}
		\sn{p-q+\xi_i+\beta_i}\triangleq  \frac{e^{-j2\pi\left(p-q+\xi_i+\beta_i\right)}-1}{e^{-j\frac{2\pi}{N}\left(p-q+\xi_i+\beta_i\right)}-1},
		\label{Eq:DFTkernel}
	\end{aligned}
\end{align}
where $	\xi_i\triangleq (\alpha_i+2Nc_1 l_i)_N$.
For fractional Doppler case, i.e., $\beta_i\neq0$, $\left\lvert [\mathbf{H}_{\mathrm{eff},i}]_{p,q} \right\rvert$ attains its peak at $q=(p+\xi_i)_N$ and decays as $q$  moves away from $\xi_i$, as shown in Fig.~\ref{fig:SparseECM}.
To facilitate low-complexity receiver implementation, we retain only the entries within a local neighborhood of width $2w_i+1$ centered at $(p+\xi_i)_N$, where $w_i$ denotes the pruning width of the $i$th path. 
Namely,
\begin{align}\label{eq:purning}
    \begin{aligned}
        [\tilde{\mathbf{H}}_{\mathrm{eff},i}]_{p,q}=\begin{cases}
            [\mathbf{H}_{\mathrm{eff},i}]_{p,q} \quad  & \left(p+\xi_i-w_i\right)_N \leq q  \\ & \leq \left(p+\xi_i+w_i\right)_N\\
            0\quad &\mathrm{otherwise}
        \end{cases}.
    \end{aligned}
\end{align}
By summing the pruned matrices of all paths, the sparsified ECM is obtained as
\begin{equation}
	\tilde{\mathbf{H}}_\mathrm{eff} = \sum_{i=1}^{P}h_i\tilde{\mathbf{H}}_{\mathrm{eff},i}.
	\label{eq:sparECM}
\end{equation}
The corresponding sparsification perturbation matrix (SPM) is given by $\Delta \mathbf{H}_\mathrm{eff}= \mathbf{H}_\mathrm{eff}-\tilde{\mathbf{H}}_\mathrm{eff}$.
Its average power is quantified by $\sigma_\Delta^2 = \frac{\operatorname{Tr}\!\left(\Delta\mathbf{H}_\mathrm{eff}\Delta\mathbf{H}_\mathrm{eff}^\mathcal{H}\right)}{N}
$.

\begin{figure}
	\centering
	\includegraphics[width = 0.9\columnwidth]{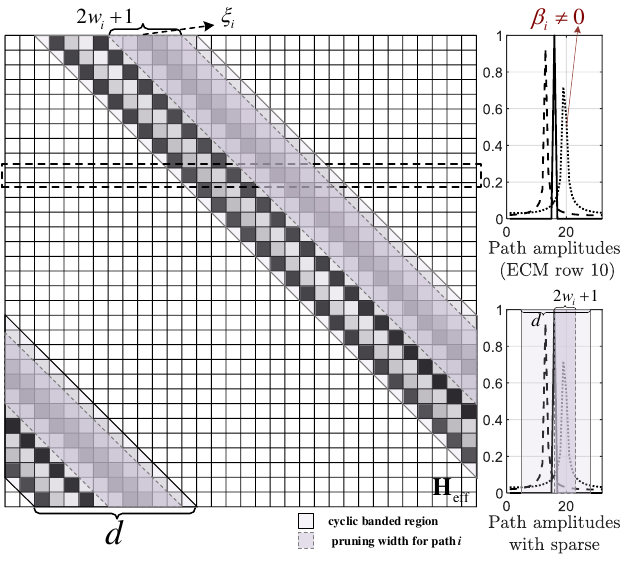}
	\caption{Structure of ECM and sparsification operation with normalized time delays $[1,2,3]$, $c_1=3/2N$ and  $N=32$.}
	\label{fig:SparseECM}
\end{figure}

\subsection{Problem Statement} 
Due to the truncation operation, the sparsity pattern of $\tilde{\mathbf{H}}_\mathrm{eff}$ becomes circularly banded.
For the $i$th path, $w_i$ controls the local retained neighborhood around the dominant coupling center $q=(p+\xi_i)_N$. 
After the pruned components of all paths are superimposed, the sparsity pattern of $\tilde{\mathbf H}_{\rm eff}$ is determined by the union of the retained supports of all paths. 
Accordingly, $d$ denotes the overall DAFT-domain truncation width of the sparsified ECM, as illustrated in Fig.~\ref{fig:SparseECM}. 
When the path centers are ordered within the considered cyclic interval as $\xi_1\le\xi_2\le\cdots\le\xi_P$, the aggregate retained region extends from $\xi_1-w_1$ to $\xi_P+w_P$, yielding $d=w_1-\xi_1+w_P+\xi_P$.
Both $d$ and $c_1$ jointly determine the sparsity of the sparsified ECM.  
In general, a smaller $d$ yields a sparser ECM, which can be further exploited to reduce detection complexity. 
However, overly aggressive truncation risks discarding significant channel energy, incurring a non-negligible performance penalty.
This fundamental complexity-performance trade-off motivates the joint design of the AFDM parameters and a low-complexity receiver. 
Specifically, we consider the joint optimization of the AFDM parameter pair 
$(c_1, c_2)$ and the design of a sparsity-aware MMSE receiver, with the aim of minimizing the BER performance.

\section{Fast Optimal AFDM Parameter Selection for Full ECM}
\label{sec3}
This section presents a fast optimal AFDM parameter selection scheme 
based on an MMSE receiver under the full ECM assumption, which serves as 
the foundation for the subsequent sparsified MMSE receiver design. 
Specifically, we first analyze the impact of the AFDM parameter $c_1$ on 
BER performance under the full ECM, i.e., $d = N$. Building on this 
analysis, we then propose the FCDA algorithm to efficiently identify the optimal 
$c_1$ while substantially reducing the search complexity.

\subsection{BER Analysis under MMSE Detection}
Assuming that an MMSE receiver is employed, the estimated symbol 
vector is expressed as
\begin{equation}
        \hat{\mathbf{s}}=(\underbrace{\mathbf{G}+\frac{\sigma_\mathrm{n}^2}{\mathrm{E}_\mathrm{s}}\mathbf{I}_N}_{\boldsymbol{\Psi}})^{-1}\mathbf{H}_{\mathrm{eff}}^\mathcal{H}\left(\mathbf{H}_{\mathrm{eff}}\mathbf{s}+\mathbf{n}\right),
		\label{eq:est_signal}
\end{equation}
where $\mathbf{G}=\mathbf{H}_{\mathrm{eff}}^\mathcal{H}\mathbf{H}_{\mathrm{eff}}$ refers to the Gram matrix of the full ECM, and $\boldsymbol{\Psi}$ represents the regularized matrix of $\mathbf{G}$. 
The signal-to-interference-plus-noise ratio (SINR) of the $k$th stream 
of $\hat{\mathbf{s}}$ is given by \cite{P1}
\begin{equation}
	\operatorname{SINR}_k = \frac{\mathrm{E}_\mathrm{s}\lvert[\boldsymbol{\Psi}^{-1}\mathbf{G}]_{k,k}\rvert^2}{\mathrm{E}_\mathrm{s}\sum_{j\neq k}\lvert[\boldsymbol{\Psi}^{-1}\mathbf{G}]_{k,j}\rvert^2+\sigma_\mathrm{n}^2\boldsymbol{\Psi}^{-1}\mathbf{G}\boldsymbol{\Psi}^{-1}}. 
	\label{eq:SINRk}
\end{equation}
With perfect channel state information (CSI) available at the receiver, 
\eqref{eq:SINRk} simplifies to \cite{P3}
	\begin{align}
		\begin{aligned}
			\operatorname{SINR}_k = \frac{[\mathbf{G}_0]_{k,k}}{1-[\mathbf{G}_0]_{k,k}},
			\label{eq:SINR_MMSE}
		\end{aligned}
	\end{align}
where $[\mathbf{G}_0]_{k,k}=[(\mathbf{I}_N+\frac{\sigma_\mathrm{n}^2}{\mathrm{E}_\mathrm{s}}\mathbf{G}^{-1})^{-1}]_{k,k}$.
Then, under $M$-QAM mapping and Gray code mapping, the average BER can be calculated as \cite{P2} 
\begin{equation}
	\begin{aligned}
		P_\mathrm{A} = \frac{1}{N}\sum_{k=1}^{N}\underbrace{\mathsmaller{a_M\operatorname{Q}\left(\sqrt{\frac{3\operatorname{SINR}_k}{M-1}}\right)-a_M^2\operatorname{Q}^2\left(\sqrt{\frac{3\operatorname{SINR}_k}{M-1}}\right)}}_{\operatorname{BER}_k},
	\end{aligned}
	\label{eq:Calber}
\end{equation}
where $a_M=4\frac{\sqrt{M}-1}{\sqrt{M}}$, $\operatorname{BER}_k$ refers to the BER of the $k$th stream of $\hat{\mathbf{s}}$, and $\operatorname{Q}(x)$ denotes the $\operatorname{Q}$ function with its expression given by
$\operatorname{Q}(x)=\frac{1}{\sqrt{2\pi}}\int_x^\infty e^{\frac{-t^2}{2}}dt$.

\subsection{FCDA Algorithm for Optimal \texorpdfstring{$c_1$}{c1} Selection}
We start by looking into  how  AFDM parameters   $\left(c_1, c_2\right)$ affect  the BER performance, i.e., $\operatorname{BER}_k$. 
Unlike existing counterparts that focus on the small-Doppler regime \cite{ref16}, we consider the more general large fractional-Doppler case, where the coupling structure of ECM becomes more complex.
Specifically, $\mathbf{G}$ can be equivalently decomposed into $\mathbf{G}=\boldsymbol{\Lambda}_{c_2}\mathbf{M}\boldsymbol{\Lambda}_{c_2}^\mathcal{H}$, where $\mathbf{M}=\mathbf{F}\boldsymbol{\Lambda}_{c_1}\mathbf{H}^\mathcal{H}\mathbf{H}\boldsymbol{\Lambda}_{c_1}^\mathcal{H}\mathbf{F}^\mathcal{H}$.
This reformulation makes the roles of $c_1$ and $c_2$ in determining the BER more explicit, which leads to Proposition~\ref{pro1}.

\begin{proposition}\label{pro1}
	From (15), the impact of the AFDM parameter $c_1$ on $\operatorname{BER}_k$ is characterized by the $k$th diagonal entry of $\left(\mathbf{M}+\frac{\sigma_\mathrm{n}^2}{\mathrm{E}_\mathrm{s}}\mathbf{I}_N \right)^{-1}$.
Note that, for ideal transmission, such as without nonlinear PAPR distortion, $c_2$ generally does not affect the $\operatorname{BER}_k$ when both the transmitter and the receiver employ the same $c_2$.
\end{proposition}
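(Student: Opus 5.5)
The plan is to track how $c_2$ enters $[\mathbf{G}_0]_{k,k}$ through the factorization $\mathbf{G}=\boldsymbol{\Lambda}_{c_2}\mathbf{M}\boldsymbol{\Lambda}_{c_2}^\mathcal{H}$, using only that $\boldsymbol{\Lambda}_{c_2}$ is a diagonal unitary matrix.

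First I will verify the decomposition itself. Substituting $\mathbf{A}=\boldsymbol{\Lambda}_{c_2}\mathbf{F}\boldsymbol{\Lambda}_{c_1}$ into $\mathbf{H}_{\mathrm{eff}}=\mathbf{A}\mathbf{H}\mathbf{A}^\mathcal{H}$ and using $\mathbf{A}^\mathcal{H}\mathbf{A}=\mathbf{I}_N$ gives
\begin{align}
\mathbf{G}=\mathbf{A}\mathbf{H}^\mathcal{H}\mathbf{H}\mathbf{A}^\mathcal{H}=\boldsymbol{\Lambda}_{c_2}\mathbf{F}\boldsymbol{\Lambda}_{c_1}\mathbf{H}^\mathcal{H}\mathbf{H}\boldsymbol{\Lambda}_{c_1}^\mathcal{H}\mathbf{F}^\mathcal{H}\boldsymbol{\Lambda}_{c_2}^\mathcal{H}=\boldsymbol{\Lambda}_{c_2}\mathbf{M}\boldsymbol{\Lambda}_{c_2}^\mathcal{H}. \nonumber
\end{align}
The time-domain matrix $\mathbf{H}$ depends on the channel and, through $\boldsymbol{\Gamma}_{\mathrm{CPP}_i}$, possibly on $c_1$, but never on $c_2$. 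Hence $\mathbf{M}$ depends on $c_1$ and not on $c_2$.

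Next I will rewrite $[\mathbf{G}_0]_{k,k}$. With $\rho=\sigma_\mathrm{n}^2/\mathrm{E}_\mathrm{s}$,
\begin{align}
\mathbf{G}_0=(\mathbf{I}_N+\rho\mathbf{G}^{-1})^{-1}=\mathbf{G}(\mathbf{G}+\rho\mathbf{I}_N)^{-1}=\mathbf{I}_N-\rho(\mathbf{G}+\rho\mathbf{I}_N)^{-1}, \nonumber
\end{align}
so $[\mathbf{G}_0]_{k,k}=1-\rho[\boldsymbol{\Psi}^{-1}]_{k,k}$. The last form also covers a singular $\mathbf{G}$ by continuity, since $\boldsymbol{\Psi}$ is always invertible. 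Because $\boldsymbol{\Lambda}_{c_2}$ is unitary, $\boldsymbol{\Psi}=\boldsymbol{\Lambda}_{c_2}(\mathbf{M}+\rho\mathbf{I}_N)\boldsymbol{\Lambda}_{c_2}^\mathcal{H}$, and therefore $\boldsymbol{\Psi}^{-1}=\boldsymbol{\Lambda}_{c_2}(\mathbf{M}+\rho\mathbf{I}_N)^{-1}\boldsymbol{\Lambda}_{c_2}^\mathcal{H}$.

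The key observation is that for a diagonal unitary $\mathbf{D}$ and any matrix $\mathbf{X}$, $[\mathbf{D}\mathbf{X}\mathbf{D}^\mathcal{H}]_{k,k}=|D_{kk}|^2X_{k,k}=X_{k,k}$. This gives
\begin{align}
[\boldsymbol{\Psi}^{-1}]_{k,k}=[(\mathbf{M}+\rho\mathbf{I}_N)^{-1}]_{k,k}. \nonumber
\end{align}
Substituting into \eqref{eq:SINR_MMSE} yields $\operatorname{SINR}_k=\bigl(1-\rho[(\mathbf{M}+\rho\mathbf{I}_N)^{-1}]_{k,k}\bigr)/\bigl(\rho[(\mathbf{M}+\rho\mathbf{I}_N)^{-1}]_{k,k}\bigr)$. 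This is a monotonically decreasing function of that single diagonal entry, and $\operatorname{BER}_k$ in \eqref{eq:Calber} is in turn a function of $\operatorname{SINR}_k$. This establishes both claims: $c_1$ affects $\operatorname{BER}_k$ only through the $k$th diagonal entry of $(\mathbf{M}+\rho\mathbf{I}_N)^{-1}$, and $c_2$ drops out entirely.

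The main obstacle is stating precisely the qualifier on $c_2$. The argument needs the receiver's DAFT to use the same $\boldsymbol{\Lambda}_{c_2}$ as the transmitter, so that it enters only as the unitary similarity above. It also needs $\mathbf{H}$ to be independent of $c_2$, meaning an ideal linear channel with no PAPR-related nonlinearity acting on $\mathbf{x}$, which does depend on $c_2$. I will state these as the ideal-transmission assumptions and note that the CPP matrix \eqref{eq:CPP} involves only $c_1$, so it does not break the argument.
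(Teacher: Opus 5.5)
Your proof is correct and follows essentially the same route as the paper. You factor $\mathbf{G}=\boldsymbol{\Lambda}_{c_2}\mathbf{M}\boldsymbol{\Lambda}_{c_2}^\mathcal{H}$, use that conjugation by the diagonal unitary $\boldsymbol{\Lambda}_{c_2}$ leaves diagonal entries unchanged to get $[\mathbf{G}_0]_{k,k}=1-\rho[(\mathbf{M}+\rho\mathbf{I}_N)^{-1}]_{k,k}$, and substitute this into the SINR expression, exactly as the paper does. Your version is somewhat cleaner: you verify the factorization of $\mathbf{G}$ and state the identity $[\mathbf{D}\mathbf{X}\mathbf{D}^\mathcal{H}]_{k,k}=X_{k,k}$ explicitly, where the paper writes this step loosely as $\boldsymbol{\Lambda}_{c_2}[\cdot]_{k,k}\boldsymbol{\Lambda}_{c_2}^\mathcal{H}$.
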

\begin{proof}
	Since $\boldsymbol{\Lambda}_{c_2}$ is a unitary diagonal matrix, $\left[\mathbf{G}_0\right]_{k,k}$ can be derived as
	\begin{align}
		\begin{aligned}
			\left[\mathbf{G}_0\right]_{k,k} = &\left[\left(\mathbf{I}_N+\frac{\sigma_\mathrm{n}^2}{\mathrm{E}_\mathrm{s}}\boldsymbol{\Lambda}_{c_2}\mathbf{M}^{-1}\boldsymbol{\Lambda}_{c_2}^\mathcal{H}\right)^{-1}\right]_{k,k}\\
			=&\boldsymbol{\Lambda}_{c_2}\left[\left(\mathbf{I}_N+\frac{\sigma_\mathrm{n}^2}{\mathrm{E}_\mathrm{s}}\mathbf{M}^{-1}\right)^{-1}\right]_{k,k}\boldsymbol{\Lambda}_{c_2}^\mathcal{H} \\
			= &1-\frac{\sigma_\mathrm{n}^2}{\mathrm{E}_\mathrm{s}}\left[\left(\mathbf{M}+\frac{\sigma_\mathrm{n}^2}{\mathrm{E}_\mathrm{s}}\mathbf{I}_N\right)^{-1}\right]_{k,k}.
			\label{eq:ind_c2}
		\end{aligned}
	\end{align}
Substituting \eqref{eq:ind_c2} into \eqref{eq:SINR_MMSE} yields
	\begin{align}
		\begin{aligned}
			\operatorname{SINR}_k = \frac{\mathrm{E}_\mathrm{s}}{\sigma_\mathrm{n}^2\left[\left(\mathbf{M}+\frac{\sigma_\mathrm{n}^2}{\mathrm{E}_\mathrm{s}}\mathbf{I}_N\right)^{-1}\right]_{k,k}}-1.
			\label{eq:SINR_kforM}
		\end{aligned}
	\end{align}
It follows from \eqref{eq:SINR_kforM} that $c_1$ affects $\operatorname{SINR}_k$ through the $k$th diagonal entry of $\left(\mathbf{M}+\frac{\sigma_\mathrm{n}^2}{\mathrm{E}_\mathrm{s}}\mathbf{I}_N \right)^{-1}$, while $\operatorname{BER}_k$ is uniquely determined by $\operatorname{SINR}_k$.
Moreover, since $\mathbf{M}$ is independent of $c_2$, $\operatorname{BER}_k$ is also independent of $c_2$, which completes the proof.
\end{proof}

Based on Proposition~\ref{pro1}, we propose the FCDA algorithm for efficiently selecting the optimal $c_1$.
Specifically, we define $\mathbf{H}_\mathrm{t}$ as the regularized inverse channel Gram matrix with respect to $\mathbf{H}$, i.e., $\mathbf{H}_\mathrm{t} = \left(\mathbf{H}^\mathcal{H}\mathbf{H}+\frac{\sigma_\mathrm{n}^2}{\mathrm{E}_\mathrm{s}}\mathbf{I}_N\right)^{-1}$, then $\operatorname{SINR}_k$ can be rewritten as
\begin{align}
	\begin{aligned}
		\operatorname{SINR}_k = \frac{\mathrm{E}_\mathrm{s}}{\sigma_\mathrm{n}^2\left[\mathbf{F}\boldsymbol{\Lambda}_{c_1}\mathbf{H}_\mathrm{t}\boldsymbol{\Lambda}_{c_1}^\mathcal{H}\mathbf{F}^\mathcal{H}\right]_{k,k}}-1.
		\label{eq:SINR_krew}
	\end{aligned}
\end{align}
To further isolate the effect of $c_1$ on $\operatorname{SINR}_k$, 
$\mathbf{H}_\mathrm{t}$ is decomposed into a diagonal part and an 
off-diagonal part as
\begin{equation}
	\mathbf{H}_\mathrm{t} = \underbrace{\operatorname{diag}\left(\mathbf{H}_\mathrm{t}\right)}_{\mathrm{diagonal \ part}} + \underbrace{\sum_{u=1}^{N-1} \operatorname{diag}\left(\mathbf{h}_u\right)\boldsymbol{\Pi}^u}_{\mathrm{off-diagonal \ part}},
	\label{eq:Ht}
\end{equation}
where $\mathbf{h}_u$ denotes the $u$th lower circulant-diagonal vector of $\mathbf{H}_{\mathrm t}$, i.e., $\operatorname{diag}\!\left(\mathbf{H}_{\mathrm t}\left(\boldsymbol{\Pi}^u\right)^\mathcal{H}\right)$.
Consequently, the $k$th diagonal element of $\mathbf{F}\boldsymbol{\Lambda}_{c_1}\mathbf{H}_\mathrm{t}\boldsymbol{\Lambda}_{c_1}^\mathcal{H}\mathbf{F}^\mathcal{H}$ can be expressed as
\begin{align}
	\begin{aligned}
	   \left[\mathbf{F}\boldsymbol{\Lambda}_{c_1}\mathbf{H}_\mathrm{t}\boldsymbol{\Lambda}_{c_1}^\mathcal{H}\mathbf{F}^\mathcal{H}\right]_{k,k}=\left[\mathbf{H}_\mathrm{t}\right]_{k,k}+\left[\mathbf{L}\right]_{k,k},
	\end{aligned}
\end{align}
where $\mathbf{L}$ refers to the off-diagonal part of $\mathbf{H}_\mathrm{t}$ after the similarity transformation, which is given by
\begin{align}
	\begin{aligned}
		\mathbf{L} =& \sum_{u=1}^{N-1}\underbrace{\mathbf{F}\boldsymbol{\Lambda}_{c_1}\left(\operatorname{diag}\left(\mathbf{h}_u\right)\boldsymbol{\Pi}^u\right)\boldsymbol{\Lambda}_{c_1}^\mathcal{H}\mathbf{F}^\mathcal{H}}_{\mathbf{L}_{u}}.
		\label{eq:Ld}
	\end{aligned}
\end{align}
It is noteworthy that the diagonal part of $\mathbf{H}_\mathrm{t}$ remains invariant under the unitary similarity transformation induced by $\mathbf{F}\boldsymbol{\Lambda}_{c_1}$, whereas the off-diagonal circulant-diagonal components are mixed by the transformation and generate additional contributions to the main diagonal. 
Therefore, the $k$th diagonal entry after the transformation consists of the original diagonal term $\left[\mathbf{H}_\mathrm{t}\right]_{k,k}$ and the contribution $\left[\mathbf{L}\right]_{k,k}$ induced by the transformed off-diagonal part.
From \eqref{eq:SINR_krew}, the variation of $\operatorname{SINR}_k$ with respect to $c_1$ is determined by $[\mathbf{L}]_{k,k}$, which represents the contribution of the transformed off-diagonal coupling to the main diagonal.
To capture the average BER $P_\mathrm{A}\left(c_1\right)$ over all streams, we aggregate this diagonal contribution by defining the total diagonal energy of $\mathbf{L}$ as  
\begin{equation}
	P_\mathrm{e}(c_1) = \sum_{k=1}^N \left\lvert \left[\mathbf{L}\right]_{k,k} \right\rvert^2.
	\label{eq:Pe}
\end{equation}

\begin{remark}
\textit{
The metric $P_\mathrm{e}(c_1)$ is effective for comparing candidate $c_1$ values under identical channel realization, noise power, modulation, and MMSE receiver settings. 
Specifically, the off-diagonal circulant-diagonal components of $\mathbf H_t$ describe the interference coupling among different DAFT-domain symbols. The chirp parameter $c_1$ changes the phase alignment of these coupling components, and the transformed term $[\mathbf L]_{k,k}$ represents the effective coupling contribution entering the $k$th stream's post-MMSE SINR. Therefore, $P_\mathrm{e}(c_1)$ measures the aggregate energy of the $c_1$-dependent DAFT-domain interference coupling over all streams. A smaller $P_\mathrm{e}(c_1)$ indicates weaker effective coupling, which generally corresponds to higher post-MMSE SINR and lower BER.}
\end{remark}

In what follows, we decompose $\left[\mathbf{L}\right]_{k,k}$ into a computationally tractable form, so that $P_\mathrm{e}(c_1)$ can be evaluated more efficiently.
By exploiting the circulant-diagonal structure of $\operatorname{diag}\left(\mathbf{h}_u\right)\boldsymbol{\Pi}^u$, the similarity transformation $\boldsymbol{\Lambda}_{c_1}(\cdot)\boldsymbol{\Lambda}_{c_1}^\mathcal{H}$ can be greatly simplified.
Specifically, the original matrix multiplication reduces to an element-wise modulation of the circulant-diagonal vector $\mathbf{h}_u$, i.e.,
\begin{align}
\boldsymbol{\Lambda}_{c_1}\left(\operatorname{diag}\left(\mathbf{h}_u\right)\boldsymbol{\Pi}^u\right)\boldsymbol{\Lambda}_{c_1}^\mathcal{H}= \operatorname{diag}\!\left(\mathbf{h}_u \odot \mathbf{q}_u\right)\boldsymbol{\Pi}^u,
\label{eq:Lambda_Hmask_Lambda}
\end{align}
where $\mathbf{q}_u$ denotes the element-wise phase modulation vector induced by the similarity transformation on the $u$th circulant-diagonal component, and its entries are given by
\begin{equation} 
\left[\mathbf{q}_u\right]_\mathsmaller{n}
= e^\mathsmaller{j2\pi c_1\left(\left(\left(n-u\right)_N\right)^2-n^2\right)}.
\label{eq:qk_def}
\end{equation}
\eqref{eq:Lambda_Hmask_Lambda} indicates that the chirp matrix $\boldsymbol{\Lambda}_{c_1}$ affects the $u$th circulant diagonal only through a per-entry phase weighting, i.e., the original circulant-diagonal vector $\mathbf{h}_u$ is element-wise modulated by $\mathbf{q}_u$.
Substituting \eqref{eq:Lambda_Hmask_Lambda} into $\mathbf{L}_{u}$, we obtain
\begin{equation}
	\mathbf{L}_{u} = \mathbf{F}\operatorname{diag}\!\left(\mathbf{h}_u \odot \mathbf{q}_u\right)\boldsymbol{\Pi}^u\mathbf{F}^\mathcal{H}.
\end{equation}
Since the surrogate metric $P_\mathrm{e}(c_1)$ in \eqref{eq:Pe} depends on the diagonal elements of $\mathbf{L}$, only the $k$th row and $k$th column need to be calculated, i.e.,
\begin{align}
	\begin{aligned}
		\left[\mathbf{L}_{u}\right]_{k,k}
		&= \mathbf{f}_k  \operatorname{diag}(\mathbf{h}_u\odot\mathbf{q}_u)\boldsymbol{\Pi}^u\mathbf{f}_k^\mathcal{H} \\
		&= \sum_{n=0}^{N-1} \left[\mathbf{h}_u\odot\mathbf{q}_u\right]_n\, 
		\left[\mathbf{f}_k\right]_{n}\left[\mathbf{f}_k^\mathcal{H}\right]_{(n-u)_N} \\
		&= \frac{1}{N}\sum_{n=0}^{N-1}\left[\mathbf{h}_u\odot\mathbf{q}_u\right]_n\,e^{j\frac{2\pi}{N}(k-1)u},
		\label{eq:diag_term_fft}
	\end{aligned}
\end{align}
where $\mathbf{f}_k$ denotes the $k$th row of $\mathbf{F}$.
Substituting \eqref{eq:diag_term_fft} into \eqref{eq:Ld} and then into \eqref{eq:Pe}, $P_\mathrm{e}(c_1)$ can be expressed as
\begin{equation}
	P_\mathrm{e}(c_1)
	=\sum_{k=1}^{N}\left|
	\frac{1}{N}\sum_{u=1}^{N-1}\sum_{n=0}^{N-1}\left[\mathbf{h}_u\odot\mathbf{q}_u\right]_n e^{j\frac{2\pi}{N}(k-1)u}
	\right|^2.
	\label{eq:Pe_DFT_form}
\end{equation}
\begin{remark}\label{Remark1}
\textit{
Let $x_u \triangleq \sum_{n=0}^{N-1}\!\left[\mathbf{h}_u\odot\mathbf{q}_u\right]_n$. 
The term inside the absolute value in \eqref{eq:Pe_DFT_form} is the $N$-point inverse DFT of the sequence $\{x_u\}_{u=0}^{N-1}$, where the direct current (DC) component $\frac{x_0}{N}$ is excluded. 
Even if $x_0\neq 0$, its time-domain contribution is a constant offset, whereas the non-DC inverse DFT component has zero mean because $\sum_{k=1}^{N} e^{j\frac{2\pi}{N}(k-1)u}=0$ for all $u$.
Hence, the constant offset is orthogonal to the non-DC part and does not affect its energy. 
Therefore, by Parseval's theorem \cite{B1},  one has
\begin{equation}\label{eq:Pasv}
\sum_{k=1}^{N}\left|
\frac{1}{N}\sum_{u=1}^{N-1} x_u e^{j\frac{2\pi}{N}(k-1)u}
\right|^2
=\frac{1}{N}\sum_{u=1}^{N-1}|x_u|^2.
\end{equation}}
\end{remark}
Based on Remark~\ref{Remark1}, \eqref{eq:Pe_DFT_form} simplifies to
\begin{equation}
	P_\mathrm{e}(c_1)
	=\frac{1}{N}\sum_{u=1}^{N-1}\left|x_u\right|^2.
	\label{eq:Pe_parseval}
\end{equation}
Note that the proposed chirp parameter selection can be readily applied to AFDM in the presence of carrier frequency offsets and timing offsets. This is mainly because the proposed schemes are based on the effective channel $\mathbf H_{\rm eff}$ and  the carrier frequency offsets and timing offsets can also be incorporated into $\mathbf H_{\rm eff}$ without changing its banded structure.
Moreover, compared with directly computing $P_\mathrm{A}(c_1)$, calculating $P_\mathrm{e}(c_1)$ from \eqref{eq:Pe_parseval} to characterize the BER performance can significantly reduce the computational complexity.
Therefore, the optimal $c_1$ can be efficiently identified through a finite enumeration over the candidate set using $P_\mathrm{e}(c_1)$ as the metric, which constitutes the proposed FCDA algorithm.

\section{Path-Wise Structured ECM Sparsification for Low-Complexity MMSE Reception}
The dominant computational bottleneck of the MMSE estimator lies in solving the symmetric positive definite (SPD) linear system involved in the estimation process, namely, inverting the regularized matrix $\boldsymbol{\Psi}$ in \eqref{eq:est_signal}. To reduce the computational complexity, we first introduce a sparsified   $\boldsymbol{\Psi}$. Based on this sparsified matrix, we then present the proposed sparsity-exploiting LC-BMMSE receiver, followed by the proposed HS-JCPS for optimal $c_1$ selection.

\begin{figure}[!t]
    \centering
    \subfloat[$\tilde{\boldsymbol{\Psi}}$]{\includegraphics[width=0.15\textwidth]{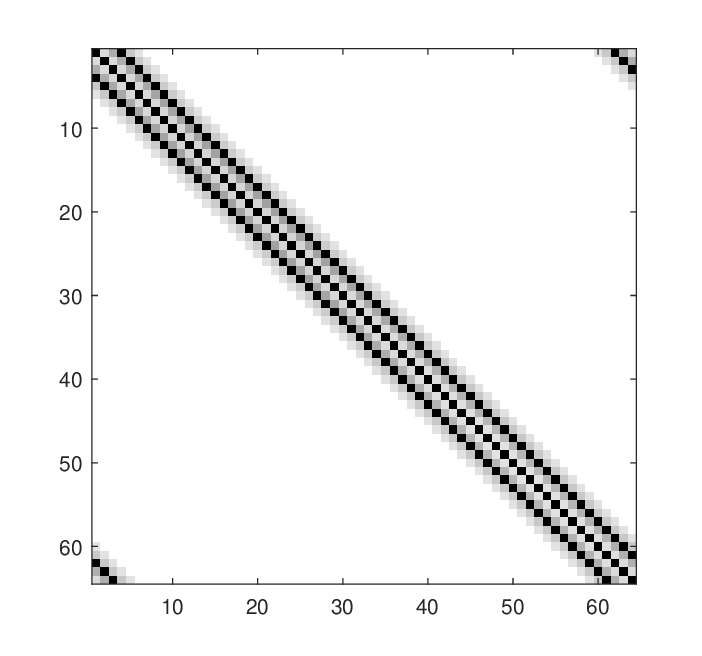}\label{subfig:psi}}\hfill
	\subfloat[$\boldsymbol{\Gamma}$]{\includegraphics[width=0.15\textwidth]{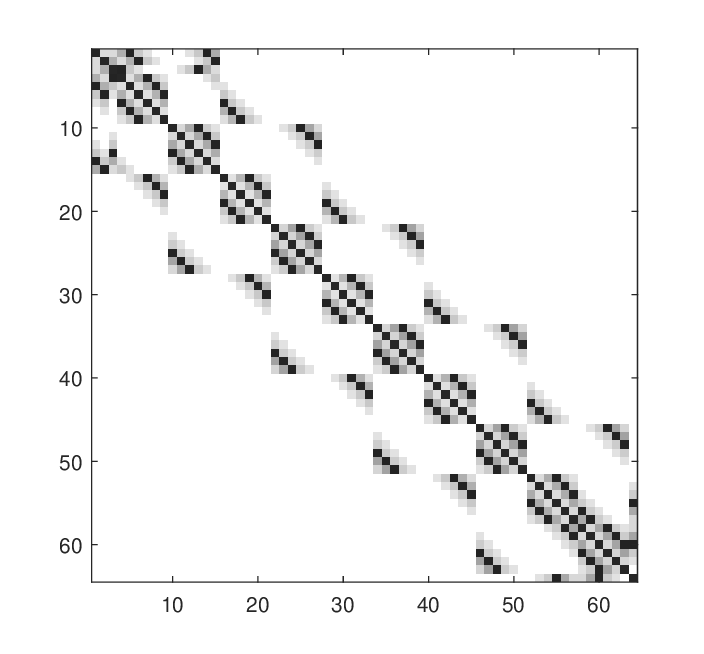}\label{subfig:Gamma}}\hfill
	\subfloat[$\mathbf{R}$]{\includegraphics[width=0.15\textwidth]{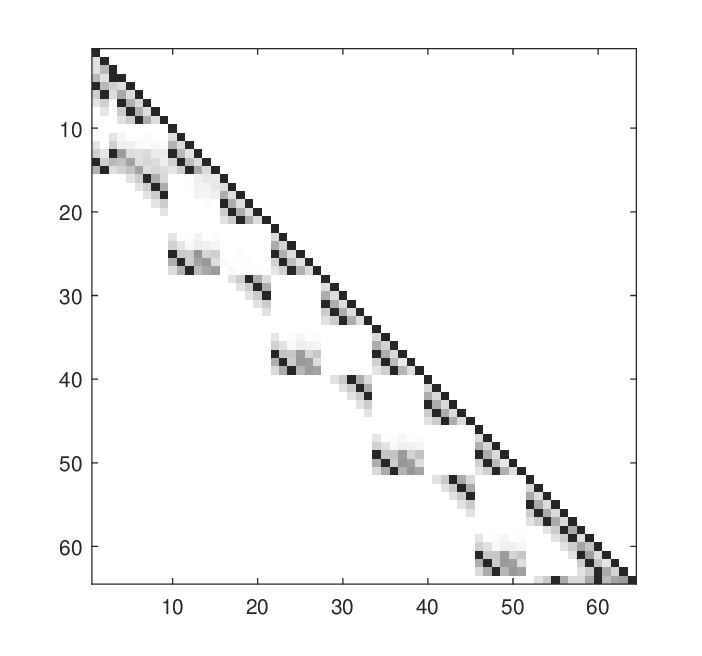}\label{subfig:R}}
	\caption{Structure of $\tilde{\boldsymbol{\Psi}}$, $\boldsymbol{\Gamma}$ and $\mathbf{R}$ with $N = 64$.}
	\label{Fig:RCMStruc}
\end{figure}

\subsection{Sparsification of \texorpdfstring{$\boldsymbol{\Psi}$}{Psi}}\label{subsec:5A}
To characterize the sparse structure of $\boldsymbol{\Psi}$, we introduce a cyclic-banded binary mask matrix corresponding to the DAFT-domain truncation width $d$, given by
\begin{equation}
\mathbf{S}\triangleq \sum_{u=-d+1}^{d-1}\boldsymbol{\Pi}^{u},
\label{eq:S_def}
\end{equation}
where the number of non-zero entries among rows of $\mathbf{S}$ is $\tilde{d}=2d-1$.
Using this mask, the sparsified regularized matrix $\tilde{\boldsymbol{\Psi}}$ is obtained, given by
\begin{equation}
\tilde{\boldsymbol{\Psi}}=\boldsymbol{\Psi}\odot \mathbf{S},
\label{eq:Psi_mask}
\end{equation}
which indicates that only its cyclic-band region specified by $\mathbf{S}$ needs to be considered when calculating $\tilde{\boldsymbol{\Psi}}$.

From \eqref{eq:S_def} and \eqref{eq:Psi_mask}, $\tilde{\boldsymbol{\Psi}}$ exhibits a fixed cyclic-banded sparse structure under the original indexing.
To exploit this structure for low-complexity matrix inversion, we apply the 
reverse Cuthill-McKee (RCM) algorithm \cite{R9,P5} to compute a permutation 
matrix  that transforms the cyclic-banded pattern into a  banded form. 
The reordered matrix is then defined as
\begin{equation}
    \boldsymbol{\Gamma} = \mathbf{P}\tilde{\boldsymbol{\Psi}}\mathbf{P}^\mathcal{T},
\end{equation} 
where  $\mathbf{P}$  is the permutation 
matrix obtained by applying the RCM algorithm to $\mathbf S$. An example of $\tilde{\boldsymbol{\Psi}}$ and the corresponding reordered 
matrix $\boldsymbol{\Gamma}$ is illustrated in Fig.~\ref{Fig:RCMStruc}. 
The permutation preserves the values of all nonzero entries while 
rearranging their positions according to the sparsity pattern of $\mathbf{S}$.
According to \cite{R9}, the reordered matrix yields a bandwidth no greater than $2\tilde{d}$, i.e., 
\begin{equation}\label{eq:Gamma}
	\left[\boldsymbol{\Gamma}\right]_{i,j}=0,\quad \forall\, i,j\ \text{s.t.}\ |i-j|>\tilde d .
\end{equation}

Neglecting the noise term, and using the permutation matrix $\mathbf{P}$ and the reordered matrix $\boldsymbol{\Gamma}$, the estimated symbol vector in \eqref{eq:est_signal} can be rewritten as 
\begin{equation}\label{eq:hat_s_comp}
	\hat{\mathbf{s}}=\mathbf{P}^\mathcal{T} \underbrace{\boldsymbol{\Gamma}^{-1}\overbrace{\mathbf{P}\tilde{\mathbf{y}}}^{\bar{\mathbf{y}}}}_{\bar{\mathbf{s}}},
\end{equation}
where $\tilde{\mathbf{y}}=\tilde{\mathbf{H}}_{\mathrm{eff}}^\mathcal{H}\mathbf{y}$ indicates the output signal of matched filter.
\begin{remark} \label{Remark:hat_s_comp}
\textit{
	From \eqref{eq:hat_s_comp},  computing $\hat{\mathbf{s}}$ involves three steps: matched filtering, i.e.,  $\tilde{\mathbf{y}}=\tilde{\mathbf{H}}_{\mathrm{eff}}^\mathcal{H}\mathbf{y}$, RCM-based permutation, i.e.,  $\bar{\mathbf{y}}=\mathbf{P}\tilde{\mathbf{y}}$ and $\hat{\mathbf{s}}=\mathbf{P}^\mathcal{T}\bar{\mathbf{s}}$, and matrix inversion i.e., $\bar{\mathbf{s}}=\boldsymbol{\Gamma}^{-1}\bar{\mathbf{y}}$. 
	Directly computing $\boldsymbol{\Gamma}^{-1}$ is unnecessary and computationally prohibitive for large $N$. 
	In fact, this inversion process can be equivalent to solving a linear system, i.e., $\boldsymbol{\Gamma}\bar{\mathbf{s}}=\bar{\mathbf{y}}$. }
\end{remark}

\begin{remark} \label{Remark3}
\textit{It is worth noting that $\mathbf{P}$ can be precomputed offline via RCM 
and reused across all channel realizations, since the RCM ordering depends 
only on the sparsity pattern of $\mathbf{S}$, which is uniquely determined 
by $d$ and is independent of the instantaneous channel realization. 
Consequently, $\boldsymbol{\Gamma} = \mathbf{P}\tilde{\boldsymbol{\Psi}}\mathbf{P}^\mathcal{T}$ 
exhibits a fixed banded structure with bandwidth fully determined by 
$d$, enabling low-complexity receiver implementation as discussed in the 
following subsection.
}
\end{remark}

\subsection{Implementation of the Sparsity-Exploiting LC-BMMSE Receiver}
\label{subsec:BCD}
This subsection details the low-complexity implementation of $\hat{\mathbf{s}}$ in Remark~\ref{Remark:hat_s_comp}, including: matched filtering $\tilde{\mathbf{y}}=\tilde{\mathbf{H}}_{\mathrm{eff}}^\mathcal{H}\mathbf{y}$, RCM-based permutation $\bar{\mathbf{y}}=\mathbf{P}\tilde{\mathbf{y}}$, and $\hat{\mathbf{s}}=\mathbf{P}^\mathcal{T}\bar{\mathbf{s}}$, and the banded Cholesky-based solution of the linear system $\boldsymbol{\Gamma}\bar{\mathbf{s}}=\bar{\mathbf{y}}$.
\subsubsection{Matched Filtering $\tilde{\mathbf{y}}=\tilde{\mathbf{H}}_{\mathrm{eff}}^\mathcal{H}\mathbf{y}$ }
Owing to the sparsified structure of the ECM, the matched filtering operation can be carried out by exploiting its sparse pattern, thereby avoiding full-matrix computation.

\subsubsection{RCM-Based Permutation \texorpdfstring{$\mathbf{P}\tilde{\mathbf{y}}$}{Py} 
and \texorpdfstring{$\mathbf{P}^\mathcal{T}\bar{\mathbf{s}}$}{PTs}}\label{sssec:Reordering}
Since $\mathbf{P}$ is a permutation matrix, both $\bar{\mathbf{y}} = 
\mathbf{P}\tilde{\mathbf{y}}$ and $\hat{\mathbf{s}} = \mathbf{P}^\mathcal{T}
\bar{\mathbf{s}}$ reduce to index  mapping operations, which  can be implemented with negligible overhead.

\subsubsection{Solving  \texorpdfstring{$\boldsymbol{\Gamma}\bar{\mathbf{s}}=\bar{\mathbf{y}}$}{Gamma}}\label{sssec:Bandsov}
We exploit the banded structure of $\boldsymbol{\Gamma}$ and solve the linear system via a banded Cholesky factorization followed by forward/backward substitutions \cite{P6}.
By performing the band Cholesky factorization of $\boldsymbol{\Gamma}$, one has $\boldsymbol{\Gamma}=\mathbf{R}\mathbf{R}^\mathcal{H}$, where $\mathbf{R}$ is a lower triangular matrix with lower bandwidth $\tilde{d}$.
The structure of $\mathbf{R}$ is shown in Fig.~\ref{subfig:R}.
Specifically, for the $j$th column, only the entries within $\left[j:\min(j+\tilde{d},N)\right]$ are involved in the factorization.
The update of the $j$th column depends only on the previous columns that overlap with this band, namely, $k=\max(1,j-\tilde{d}),\cdots,j-1$.
For each such $k$, the overlapping interval is performed, i.e.,
\begin{equation}
    [\boldsymbol{\Gamma}]_\mathsmaller{j:\lambda_k,j}=[\boldsymbol{\Gamma}]_\mathsmaller{j:\lambda_k,j}-[\boldsymbol{\Gamma}]_\mathsmaller{j,k}\,[\boldsymbol{\Gamma}]_\mathsmaller{j:\lambda_k,k},
\end{equation}
where $\lambda_k=\min(k+\tilde{d},N)$.
After all overlapping contributions are accumulated, the resulting band segment is normalized by $\sqrt{[\boldsymbol{\Gamma}]_\mathsmaller{j,j}}$ to obtain the $j$th column of the Cholesky factor $\mathbf{R}$.
After obtaining $\mathbf{R}$, the linear system can be re-expressed as
\begin{equation}\label{eq:Chol_linear}
	\bar{\mathbf{y}} = \mathbf{R}\underbrace{\mathbf{R}^\mathcal{H}\bar{\mathbf{s}}}_{\tilde{\mathbf{s}}}.
\end{equation}
Based on \eqref{eq:Chol_linear},  $\bar{\mathbf{s}}$ is recovered by solving two banded triangular 
systems sequentially.  
 First, the lower triangular system 
$\mathbf{R}\tilde{\mathbf{s}} = \bar{\mathbf{y}}$ is solved by forward 
substitution.
Exploiting the banded structure of $\mathbf{R}$, the $i$th component of $\tilde{\mathbf{s}}$ is computed as
\begin{equation}
	[\tilde{\mathbf{s}}]_i = \frac{1}{\left[\mathbf{R}\right]_{i,i}}\left([\bar{\mathbf{y}}]_i - \sum_{\mathsmaller{j=\max(1,\,i-\tilde{d})}}^{\mathsmaller{i-1}}  \left[\mathbf{R}\right]_{i,j}\,[\tilde{\mathbf{s}}]_j\right).
\end{equation}
Then, the upper triangular system $\mathbf{R}^\mathcal{H}\bar{\mathbf{s}} = 
\tilde{\mathbf{s}}$ is solved by backward substitution. 
The $i$th component of $\bar{\mathbf{s}}$ is computed as
\begin{equation}
	[\bar{\mathbf{s}}]_i	= \frac{1}{\left[\mathbf{R}\right]_{i,i}}\left(
	[\tilde{\mathbf{s}}]_i - \sum_{\mathsmaller{j=i+1}}^{\mathsmaller{\min(i+\tilde{d},N)}} \left[\mathbf{R}\right]_{j,i}\,[\bar{\mathbf{s}}]_j\right).
\end{equation}
Both substitutions are restricted to the band, so each row involves at most $\tilde{d}$ operations.
Finally, the estimated vector  is obtained by
\begin{equation}
	\hat{\mathbf{s}} = \mathbf{P}^\mathcal{T} \bar{\mathbf{s}}.
\end{equation}
Overall, the detailed steps of the proposed sparsity-exploiting LC-BMMSE receiver are presented in Algorithm \ref{alg:Low-Complexity Equalizer}.

\begin{algorithm}[t]
	\caption{The proposed sparsity-exploiting LC-BMMSE receiver}
	\label{alg:Low-Complexity Equalizer}
	\renewcommand{\algorithmicrequire}{\textbf{Input:}}
	\renewcommand{\algorithmicensure}{\textbf{Output:}}
	\begin{algorithmic}[1]
			\REQUIRE{Vector $\mathbf{y} \in \mathbb{C}^{N \times 1}$, Matrix $\mathbf{S}$, and $\frac{\sigma_\mathrm{n}^{2}+\sigma_\Delta^{2}}{\mathrm{E}_\mathrm{s}} \in \mathbb{R}^{+}$}
			\ENSURE{Estimated vector $\hat{\mathbf{s}} \in \mathbb{C}^{N \times 1}$}
			\STATE Compute $\tilde{\mathbf{y}}$;
			\STATE Use the precomputed permutation matrix $\mathbf P$ obtained from the RCM ordering of $\mathbf S$;
			\STATE Reorder $\tilde{\boldsymbol{\Psi}}$: $\boldsymbol{\Gamma}= \mathbf{P} \tilde{\boldsymbol{\Psi}} \mathbf{P}^\mathcal{T}$;
			\STATE Rearrange $\tilde{\mathbf{y}}$ through the permutation matrix $\mathbf{P}$ to obtain $\bar{\mathbf{y}}$;
			\STATE Compute $\mathbf{R}$ using band Cholesky factorization;
			\STATE Compute	$\bar{\mathbf{y}}=\mathbf{R}\tilde{\mathbf{s}}$ and $\tilde{\mathbf{s}}=\mathbf{R}^\mathcal{H}\bar{\mathbf{s}}$ using low-complexity forward and backward substitution respectively;
			\STATE Rearrange $\bar{\mathbf{s}}$ through the permutation matrix $\mathbf{P}^\mathcal{T}$ to obtain $\hat{\mathbf{s}}$, i.e., $\hat{\mathbf{s}} = \mathbf{P}^\mathcal{T} \bar{\mathbf{s}}$;
			\RETURN $\hat{\mathbf{s}}$
	\end{algorithmic}
\end{algorithm}

\subsection{The Proposed HS-JCPS}\label{sec:SparsH}
Having prescribed the sparsity mask matrix $\mathbf{S}$ for $\tilde{\boldsymbol{\Psi}}$, 
we next construct a sparsified ECM $\tilde{\mathbf{H}}_{\mathrm{eff}}$ referring to \eqref{eq:sparECM} such that the induced 
$\tilde{\boldsymbol{\Psi}}$ complies with \eqref{eq:Psi_mask}.
For convenience, we define the pruning-width vector as $\mathbf{w}$, which contains the pruning width $w_i$ for all paths.
Nevertheless, such sparsification inevitably removes part of the useful channel components and introduces a model mismatch relative to the original ECM, which generally leads to SINR degradation in MMSE detection. 
To quantify the performance loss caused by sparsification, we establish a lower bound on the achievable SINR of the resulting MMSE estimator.

\begin{lemma}\label{lemma1}
	The MMSE estimator using the sparsified ECM $\tilde{\mathbf{H}}_\mathrm{eff}$ is expressed as $\tilde{\mathbf{W}}=\tilde{\boldsymbol{\Psi}}^{-1}\tilde{\mathbf{H}}_\mathrm{eff}^\mathcal{H}$.
	A lower bound of SINR for MMSE detection based on $\tilde{\mathbf{W}}$ satisfies
	\begin{equation}
		\operatorname{SINR}_k \geq \frac{\left[\tilde{\mathbf{M}}\right]_{k,k}}{1+\frac{\left[\tilde{\mathbf{M}}\mathbf{D}^\mathcal{H}\right]_{k,k}}{\left[\tilde{\mathbf{M}}\right]_{k,k}}-\left[\tilde{\mathbf{M}}\right]_{k,k}},
		\label{eq:SINRbound}
	\end{equation}
	where $\tilde{\mathbf{M}}=\tilde{\boldsymbol{\Psi}}^{-1}\tilde{\mathbf{G}}+\mathbf{D}$ indicates the equivalent matrix induced by the MMSE estimator  and $\mathbf{D} = \tilde{\boldsymbol{\Psi}}^{-1}\tilde{\mathbf{H}}_\mathrm{eff}^\mathcal{H}\Delta\mathbf{H}_\mathrm{eff}$ represents the interference matrix.
\end{lemma}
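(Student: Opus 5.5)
We write the output of the sparsified estimator in signal-plus-interference-plus-noise form and bound each term separately. Since $\mathbf{H}_\mathrm{eff}=\tilde{\mathbf{H}}_\mathrm{eff}+\Delta\mathbf{H}_\mathrm{eff}$ and $\tilde{\mathbf{G}}=\tilde{\mathbf{H}}_\mathrm{eff}^\mathcal{H}\tilde{\mathbf{H}}_\mathrm{eff}$, we have
\begin{equation*}
\hat{\mathbf{s}}=\tilde{\mathbf{W}}\mathbf{y}=\tilde{\boldsymbol{\Psi}}^{-1}\tilde{\mathbf{G}}\mathbf{s}+\tilde{\boldsymbol{\Psi}}^{-1}\tilde{\mathbf{H}}_\mathrm{eff}^\mathcal{H}\Delta\mathbf{H}_\mathrm{eff}\mathbf{s}+\tilde{\mathbf{W}}\mathbf{n}=\tilde{\mathbf{M}}\mathbf{s}+\tilde{\mathbf{W}}\mathbf{n}.
\end{equation*}
The $k$th stream is therefore $[\tilde{\mathbf{M}}]_{k,k}s_k+\sum_{j\neq k}[\tilde{\mathbf{M}}]_{k,j}s_j+[\tilde{\mathbf{W}}\mathbf{n}]_k$. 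This gives the SINR in the same form as \eqref{eq:SINRk}:
\begin{equation*}
\operatorname{SINR}_k=\frac{\mathrm{E}_\mathrm{s}\lvert[\tilde{\mathbf{M}}]_{k,k}\rvert^2}{\mathrm{E}_\mathrm{s}\big([\tilde{\mathbf{M}}\tilde{\mathbf{M}}^\mathcal{H}]_{k,k}-\lvert[\tilde{\mathbf{M}}]_{k,k}\rvert^2\big)+\sigma_\mathrm{n}^2[\tilde{\mathbf{W}}\tilde{\mathbf{W}}^\mathcal{H}]_{k,k}}.
\end{equation*}

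The central step is to express the denominator through $\tilde{\mathbf{M}}$ itself, mimicking the perfect-CSI simplification that led to \eqref{eq:SINR_MMSE}. Write $\rho=(\sigma_\mathrm{n}^2+\sigma_\Delta^2)/\mathrm{E}_\mathrm{s}$ as the regularisation used in Algorithm~\ref{alg:Low-Complexity Equalizer}, so that $\tilde{\boldsymbol{\Psi}}=\tilde{\mathbf{G}}+\rho\mathbf{I}_N$. The band masking in \eqref{eq:Psi_mask} is treated as exact here, since $\tilde{\mathbf{G}}$ is supported on $\mathbf{S}$. Let $\mathbf{M}_0=\tilde{\boldsymbol{\Psi}}^{-1}\tilde{\mathbf{G}}$. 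Then
\begin{equation*}
\mathbf{M}_0\mathbf{M}_0^\mathcal{H}+\rho\,\tilde{\mathbf{W}}\tilde{\mathbf{W}}^\mathcal{H}=\tilde{\boldsymbol{\Psi}}^{-1}\tilde{\mathbf{G}}(\tilde{\mathbf{G}}+\rho\mathbf{I})\tilde{\boldsymbol{\Psi}}^{-1}=\mathbf{M}_0.
\end{equation*}
Next expand $\tilde{\mathbf{M}}\tilde{\mathbf{M}}^\mathcal{H}=(\mathbf{M}_0+\mathbf{D})(\mathbf{M}_0+\mathbf{D})^\mathcal{H}$. The noise weighting $\sigma_\mathrm{n}^2\le\rho\mathrm{E}_\mathrm{s}$ lets us replace $\sigma_\mathrm{n}^2[\tilde{\mathbf{W}}\tilde{\mathbf{W}}^\mathcal{H}]_{k,k}$ by the upper bound $\mathrm{E}_\mathrm{s}\rho[\tilde{\mathbf{W}}\tilde{\mathbf{W}}^\mathcal{H}]_{k,k}$, which can only decrease the SINR. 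The denominator is then at most
\begin{equation*}
\mathrm{E}_\mathrm{s}\big([\mathbf{M}_0]_{k,k}+[\mathbf{D}\mathbf{M}_0^\mathcal{H}+\mathbf{M}_0\mathbf{D}^\mathcal{H}+\mathbf{D}\mathbf{D}^\mathcal{H}]_{k,k}-\lvert[\tilde{\mathbf{M}}]_{k,k}\rvert^2\big).
\end{equation*}
Regrouping $[\mathbf{M}_0]_{k,k}+[\mathbf{D}]_{k,k}=[\tilde{\mathbf{M}}]_{k,k}$, the remaining cross terms should collapse to something of the form $[\tilde{\mathbf{M}}\mathbf{D}^\mathcal{H}]_{k,k}$. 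The $\mathbf{D}\mathbf{M}_0^\mathcal{H}+\mathbf{D}\mathbf{D}^\mathcal{H}$ part is $[\mathbf{D}\tilde{\mathbf{M}}^\mathcal{H}]_{k,k}$ up to conjugation on the diagonal. This leaves a residual $[\mathbf{M}_0\mathbf{D}^\mathcal{H}]_{k,k}-[\mathbf{D}]_{k,k}$, which I would handle as follows.

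\begin{itemize}
\item Treat $[\tilde{\mathbf{M}}]_{k,k}$ as real and positive, taking real parts and using $\lvert\cdot\rvert\ge\operatorname{Re}$ where needed.
\item Absorb the residual into $\sigma_\Delta^2$, which appears in $\rho$ precisely to account for the SPM power.
\end{itemize}

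Dividing numerator and denominator by $[\tilde{\mathbf{M}}]_{k,k}$ then gives exactly
\begin{equation*}
\frac{[\tilde{\mathbf{M}}]_{k,k}}{1+[\tilde{\mathbf{M}}\mathbf{D}^\mathcal{H}]_{k,k}/[\tilde{\mathbf{M}}]_{k,k}-[\tilde{\mathbf{M}}]_{k,k}},
\end{equation*}
which is the right-hand side of \eqref{eq:SINRbound}. As a consistency check, setting $\mathbf{D}=\mathbf{0}$ must recover \eqref{eq:SINR_MMSE} with equality.

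I expect the main obstacle to be the last bookkeeping step: making the cross terms $\mathbf{M}_0\mathbf{D}^\mathcal{H}$, $\mathbf{D}\mathbf{M}_0^\mathcal{H}$ and $\mathbf{D}\mathbf{D}^\mathcal{H}$ combine into the single term $[\tilde{\mathbf{M}}\mathbf{D}^\mathcal{H}]_{k,k}$ with the correct inequality direction. My first attempt would use the identity $\mathbf{M}_0\mathbf{M}_0^\mathcal{H}+\rho\tilde{\mathbf{W}}\tilde{\mathbf{W}}^\mathcal{H}=\mathbf{M}_0$ together with the $\sigma_\Delta^2$ slack in $\rho$ to dominate the leftover Hermitian terms. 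If the signs do not cooperate, the fallback is a Cauchy–Schwarz bound on the off-diagonal interference row $\{[\tilde{\mathbf{M}}]_{k,j}\}_{j\ne k}$, which yields a slightly looser expression that still has the stated form.
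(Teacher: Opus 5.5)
Everything up to your residual is correct and follows the paper's route in Appendix~\ref{appendixA}: your identity $\mathbf{M}_0\mathbf{M}_0^{\mathcal{H}}+\rho\tilde{\mathbf{W}}\tilde{\mathbf{W}}^{\mathcal{H}}=\mathbf{M}_0$ is \eqref{eq:eqM}. The bookkeeping is cleaner if you group $\mathbf{M}_0\mathbf{D}^{\mathcal{H}}+\mathbf{D}\mathbf{D}^{\mathcal{H}}=\tilde{\mathbf{M}}\mathbf{D}^{\mathcal{H}}$ exactly, with no conjugation, and use $\mathbf{M}_0^{\mathcal{H}}=\mathbf{M}_0$ and $\mathbf{M}_0-\mathbf{I}_N=-\rho\tilde{\boldsymbol{\Psi}}^{-1}$. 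Set $\mathbf{a}=\tilde{\mathbf{H}}_\mathrm{eff}\tilde{\boldsymbol{\Psi}}^{-1}\mathbf{e}_k$ and $\mathbf{b}=\Delta\mathbf{H}_\mathrm{eff}\tilde{\boldsymbol{\Psi}}^{-1}\mathbf{e}_k$. Then $[\tilde{\mathbf{W}}\tilde{\mathbf{W}}^{\mathcal{H}}]_{k,k}=\|\mathbf{a}\|^2$ and $[\mathbf{D}\tilde{\boldsymbol{\Psi}}^{-1}]_{k,k}=\mathbf{a}^{\mathcal{H}}\mathbf{b}$, and you get the exact identity \eqref{eq:MMkk},
\[
[\tilde{\mathbf{M}}\tilde{\mathbf{M}}^{\mathcal{H}}]_{k,k}+\frac{\sigma_\mathrm{n}^2}{\mathrm{E}_\mathrm{s}}\|\mathbf{a}\|^2=[\tilde{\mathbf{M}}]_{k,k}+[\tilde{\mathbf{M}}\mathbf{D}^{\mathcal{H}}]_{k,k}-\frac{\sigma_\Delta^2}{\mathrm{E}_\mathrm{s}}\|\mathbf{a}\|^2-\rho\,\mathbf{a}^{\mathcal{H}}\mathbf{b}.
\]
So, granting the real-part convention for $[\tilde{\mathbf{M}}]_{k,k}$, \eqref{eq:SINRbound} amounts to the sign condition $\frac{\sigma_\Delta^2}{\mathrm{E}_\mathrm{s}}\|\mathbf{a}\|^2+\rho\operatorname{Re}(\mathbf{a}^{\mathcal{H}}\mathbf{b})\ge 0$.

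That condition is the genuine gap. The cross term $\mathbf{a}^{\mathcal{H}}\mathbf{b}$ correlates retained and discarded responses and has no definite sign. Taking real parts does not control it, and a Cauchy--Schwarz bound on the interference row yields a different expression, not one in $[\tilde{\mathbf{M}}\mathbf{D}^{\mathcal{H}}]_{k,k}$. Your idea of absorbing the residual into $\sigma_\Delta^2$ double-counts. The only $\sigma_\Delta^2$ slack is the term $\frac{\sigma_\Delta^2}{\mathrm{E}_\mathrm{s}}\|\mathbf{a}\|^2$, and you already spent it when you inflated $\sigma_\mathrm{n}^2$ to $\rho\mathrm{E}_\mathrm{s}$ (a valid step in itself). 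After that you would need $\operatorname{Re}(\mathbf{a}^{\mathcal{H}}\mathbf{b})\ge 0$ on its own.

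The paper keeps the $\sigma_\Delta^2$ term and pairs it with $\rho[\mathbf{D}\tilde{\boldsymbol{\Psi}}^{-1}]_{k,k}$. It then asserts positivity of the pair in Proposition~\ref{proposition:1}, justified only by $\|\Delta\mathbf{H}_\mathrm{eff}\|_F\ll\|\tilde{\mathbf{H}}_\mathrm{eff}\|_F$. That is an assumption, not a derivation, and smallness alone does not suffice. Take the toy instance $\tilde{\mathbf{H}}_\mathrm{eff}=\mathbf{I}_N$, $\Delta\mathbf{H}_\mathrm{eff}=\delta\mathbf{I}_N$, $\mathrm{E}_\mathrm{s}=1$. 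The exact SINR is $(1+\delta)^2/\sigma_\mathrm{n}^2$, while the right-hand side of \eqref{eq:SINRbound} equals $(1+\delta)/(\sigma_\mathrm{n}^2+\delta^2)$. With $\delta=-0.1$ and $\sigma_\mathrm{n}^2=1$ this gives $0.81<0.891$, even though $\|\Delta\mathbf{H}_\mathrm{eff}\|_F$ is a tenth of $\|\tilde{\mathbf{H}}_\mathrm{eff}\|_F$.

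To turn your argument into a proof, do not discard the $\sigma_\Delta^2$ term. Instead add an explicit hypothesis such as $\sigma_\Delta^2\|\mathbf{a}\|\ge(\sigma_\mathrm{n}^2+\sigma_\Delta^2)\|\mathbf{b}\|$, which by Cauchy--Schwarz gives the required sign.
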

\begin{proof}
	For a detailed derivation, refer to Appendix~\ref{appendixA}.
\end{proof}


It should be noted that Lemma~\ref{lemma1} is used to characterize the SINR degradation induced by ECM sparsification. The lower bound serves as an analytical reference for quantifying the performance degradation caused by the sparsification perturbation $\Delta\mathbf H_{\rm eff}$.
Specifically, the discarded channel component induces the interference term $\mathbf{D}$, which degrades the effective SINR. 
Hence, a smaller $\Delta\mathbf{H}_\mathrm{eff}$ leads to a tighter SINR lower bound.

Beyond the SINR degradation characterized in Lemma~\ref{lemma1}, ECM sparsification can also affect the diversity behavior as it changes the path-induced error responses retained by the receiver.

\begin{remark}
\textit{Following the rank-based AFDM diversity criterion, let $\boldsymbol{\delta}$ denote a nonzero symbol-error vector, and define the sparse path-induced error-response matrix as
\begin{equation}
	\boldsymbol{\Phi}_{\rm sp}(\boldsymbol{\delta};\mathbf w)
=
[\tilde{\mathbf H}_{{\rm eff},1}\boldsymbol{\delta},\ldots,
\tilde{\mathbf H}_{{\rm eff},P}\boldsymbol{\delta}].
\end{equation}
The diversity order supported by the sparsified ECM can be characterized as
\begin{equation}
	\rho_{\rm sp}=
\min_{\boldsymbol{\delta}\neq\mathbf 0}
{\rm rank}
\left(
\boldsymbol{\Phi}_{\rm sp}(\boldsymbol{\delta};\mathbf w)
\right)
\le P.
\end{equation}
Thus, the sparsified ECM preserves the full path diversity as long as the retained path-induced responses keep full column rank, i.e.,
${\rm rank}(\boldsymbol{\Phi}_{\rm sp}(\boldsymbol{\delta};\mathbf w))=P$
for all nonzero error vectors. This condition can be satisfied when the pruning widths retain the dominant coupling neighborhood of each physical path. In the proposed path-wise sparsification, the retained support is centered at the dominant DAFT-domain coupling locations, and the widths are jointly optimized with $c_1$ to preserve the main path-induced components. Therefore, the discarded part mainly contributes to the sparsification perturbation $\Delta\mathbf H_{\rm eff}$ and the corresponding residual interference, rather than necessarily reducing the diversity order.}
\end{remark}

For a fixed $\mathbf{S}$, multiple feasible sparsified ECMs 
$\tilde{\mathbf{H}}_{\mathrm{eff}}$ may exist, corresponding to different 
pruning-width vector $\mathbf{w}$ that induce the same cyclic-banded support 
of $\tilde{\boldsymbol{\Psi}}$. For example, under $c_1 = 1/(2N)$, the 
pruning-width vectors $\mathbf{w} \in \{\{0,2\}, \{1,1\}, \{2,0\}\}$ yield 
the sparsified ECM structures $\tilde{\mathbf{H}}_{\mathrm{eff}}$ shown in Figs.~\ref{fig1}--\ref{fig3}, with 
the corresponding $\tilde{\boldsymbol{\Psi}}$ structures shown in 
Figs.~\ref{fig4}--\ref{fig6}. As long as the maximum number of nonzero 
entries per row of the sparsified ECM equals $d = 6$, the induced 
$\tilde{\boldsymbol{\Psi}}$ shares an identical cyclic-banded support with 
at most $\tilde{d} = 11$ nonzero entries per row. Consequently, all such 
candidates incur identical solver complexity for 
$\boldsymbol{\Gamma}\bar{\mathbf{s}} = \bar{\mathbf{y}}$, yet their 
detection performance may be different, motivating the MMSE-optimal 
sparsification design developed in the rest of this section.
 
\begin{figure*}[!t]
    \centering
	\subfloat[\small{${\tilde{\mathbf{H}}_\mathrm{eff}}, \mathbf{w}=\left\{0,2\right\}$}]{\includegraphics[width=0.15\textwidth]{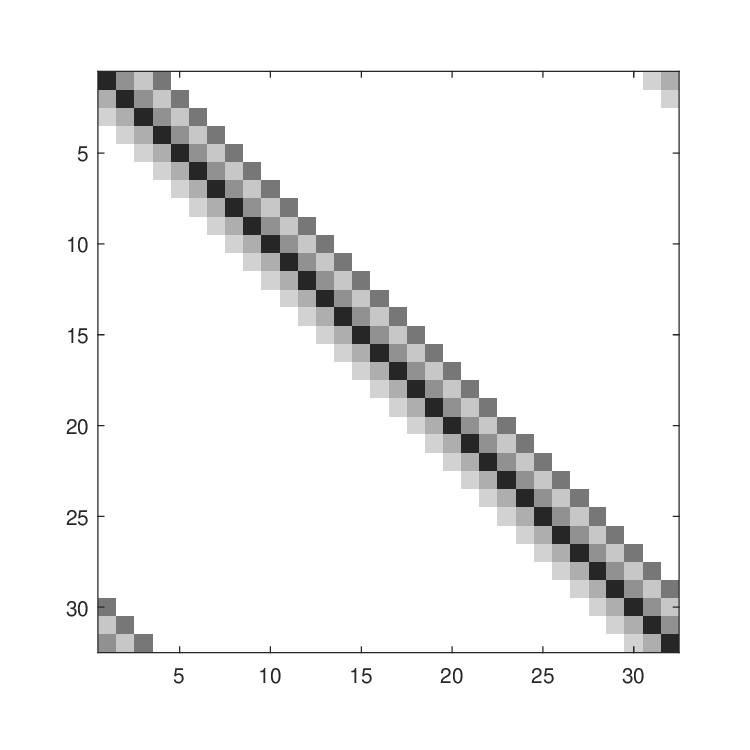}\label{fig1}}\hfill
	\subfloat[\small{${\tilde{\mathbf{H}}_\mathrm{eff}},\mathbf{w}=\left\{1,1\right\}$}]{\includegraphics[width=0.151\textwidth]{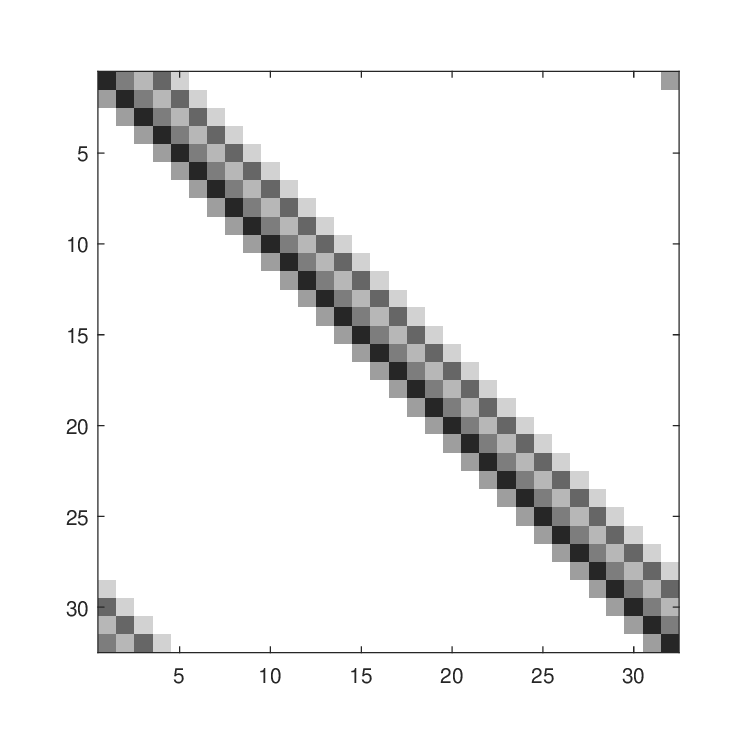}\label{fig2}}\hfill
	\subfloat[\small{${\tilde{\mathbf{H}}_\mathrm{eff}},\mathbf{w}=\left\{2,0\right\}$}]{\includegraphics[width=0.15\textwidth]{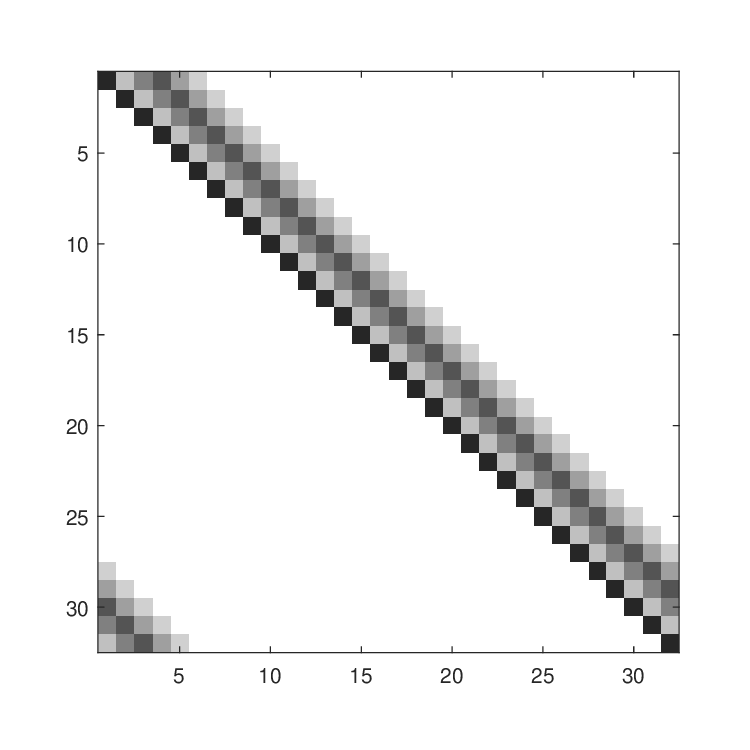}\label{fig3}}\hfill
	\subfloat[\small{${\tilde{\boldsymbol{\Psi}}},\mathbf{w}=\left\{0,2\right\}$}]{\includegraphics[width=0.15\textwidth]{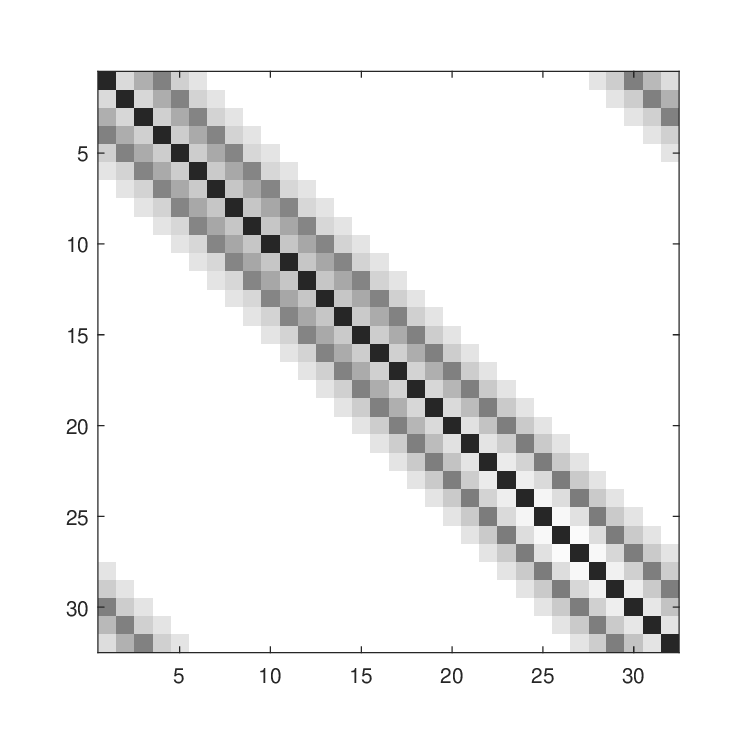}\label{fig4}}\hfill
    \subfloat[\small{${\tilde{\boldsymbol{\Psi}}},\mathbf{w}=\left\{1,1\right\}$}]{\includegraphics[width=0.15\textwidth]{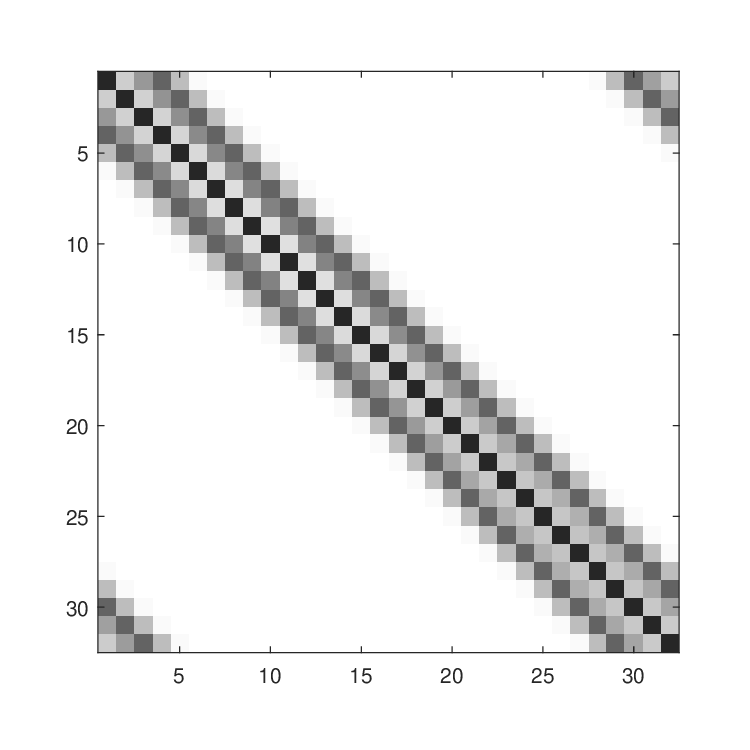}\label{fig5}}\hfill
	\subfloat[\small{${\tilde{\boldsymbol{\Psi}}},\mathbf{w}=\left\{2,0\right\}$}]{\includegraphics[width=0.15\textwidth]{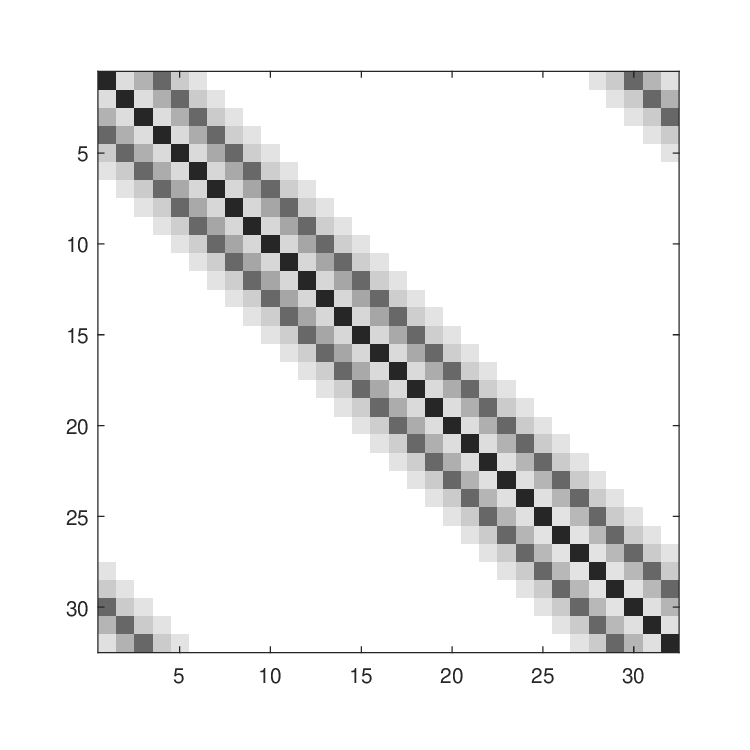}\label{fig6}}\hfill
	\caption{Structures of $\tilde{\mathbf{H}}_\mathrm{eff}$ and $\tilde{\boldsymbol{\Psi}}$ with normalized time delay vectors $\left\{0,1\right\}$, $c_1=1/{2N}$ and $N = 32$. \protect\subref{fig1}, \protect\subref{fig2} and \protect\subref{fig3} have different structures, but the maximum number of entries among rows is $d=6$. \protect\subref{fig4}, \protect\subref{fig5} and \protect\subref{fig6} have the same matrix structure, and its maximum number of entries among rows is $\tilde{d}=11$.}
	\label{Fig:Matrices_Structure}
\end{figure*}

In the following, we develop a joint optimization of the AFDM parameter $c_1$ and the pruning-width vector $\mathbf{w}$ under a fixed complexity budget. 
Specifically, with the sparsity mask $\mathbf{S}$ fixed by the truncation 
width $d$, the parameter $c_1$ governs the relative peak locations of 
different paths in the DAFT domain, while $\mathbf{w}$ controls the 
retained support around each peak. Together, they determine 
$\tilde{\mathbf{H}}_{\mathrm{eff}}(\mathbf{w})$ and hence the MMSE 
detection performance associated with $\tilde{\boldsymbol{\Psi}}$. Under 
the pruning-width budget $\sum_{i=1}^P w_i = w_\mathrm{s}$, the 
optimization problem is formulated as
\begin{align}
	\begin{aligned}
		&\min_{c_1,\mathbf{w} \in \mathbb{Z}_{+}^P} \quad P_\mathrm{A}\left(c_1,\mathbf{w}\right) \\
		\text { s.t. } 
		& \mathbf{w}= [w_1,\cdots,w_P]^\mathcal{T}\in \mathbb{Z}_+^P,\ \sum_{i=1}^P w_i =w_\mathrm{s},\\
		& c_1 = \frac{k}{2N}, \ k \in \{1,\ldots,N\},\ \tilde{\boldsymbol{\Psi}}=\boldsymbol{\Psi}\odot \mathbf{S},
		\label{eq:Sparse OptP2}
	\end{aligned}
\end{align}
where $P_\mathrm{A}\left(c_1,\mathbf{w}\right)$ is evaluated from the sparsified ECM $\tilde{\mathbf{H}}_\mathrm{eff}(\mathbf{w})$ based on \eqref{eq:SINRk} and \eqref{eq:Calber}. 
In general, minimizing $P_\mathrm{A}(c_1,\mathbf w)$ is also helpful for reducing the condition number, as it tends to reject sparsified ECMs with weak singular directions, thereby improving their conditioning.  In addition, while $P_\mathrm{A}(c_1,\mathbf w)$ is specifically designed for the LC-MMSE detector, our proposed HS-JCPS can also be readily extended to the ZF detector by following a similar approaches of \eqref{eq:Sparse OptP2}.
As mentioned, with the sparsity mask $\mathbf{S}$ fixed, the receiver complexity is 
determined, and the remaining task is to identify the feasible 
pair $(c_1, \mathbf{w})$ that minimizes  $P_\mathrm{A}(c_1, 
\mathbf{w})$ under the same structural constraint.
As $c_1$ takes values from a finite discrete set and each $w_p$ is an integer constrained within a finite interval, an exhaustive search over all $(c_1,\mathbf{w})$ combinations can be employed to obtain the optimal solution. 
However, such an approach becomes intractable and computationally prohibitive when the number of paths $P$ or the width range is large. 

 To address the above issues, we develop the HS-JCPS algorithm, which efficiently 
approximates the optimal solution under reduced search complexity. 
Specifically, the proposed HS-JCPS algorithm consists of two loops. 
The outer loop generates feasible candidates of $c_1$ together with their corresponding initial pruning-width vectors, while the inner loop performs a local heuristic search over $\mathbf{w}$ for each fixed $c_1$.

\subsubsection{The Outer Loop}
For a fixed cyclic band width $d$,  the total pruning-width budget is set as
\begin{equation}\label{eq:ws}
	w_\mathrm{s}=\sum_{i=1}^{P} w_i,
\end{equation}
so that all candidate sparsified ECMs are compared under the same complexity budget. 
For each candidate $c_1$, the peak locations 
$\{\xi_i\}_{i=1}^{P}$ are first determined, and the discrete candidate set of 
$c_1$ is restricted by
\begin{equation}
    2Nc_1 \leq \left\lfloor \frac{d}{l_P - l_1} \right\rfloor = r_{\max} ,
\end{equation}
to ensure all path peaks lie within the prescribed cyclic band. Under this 
constraint, the pruning widths of the first and last paths satisfy
\begin{equation}\label{eq:w1wP}
	w_1+w_P=d-(\xi_P-\xi_1),
\end{equation}
which is determined by the cyclic band width $d$ and the relative peak separation $\xi_P-\xi_1$. 
Accordingly, for each feasible candidate $c_1$, a set of feasible initial pruning-width vectors can be generated by enumerating the admissible values of $w_1$, while $w_P$ is obtained from the above relation. 
Let $r$ denote the index of the candidate $c_1$, and let $l$ denote the index of the corresponding feasible initialization. 
Then, the resulting initial pruning-width vector is denoted by $\mathbf{w}_{r,l}^{(0)}$.

\subsubsection{The Inner Loop}
For each feasible pair $(r,l)$, the inner loop refines the corresponding pruning-width vector by a local search under the fixed budget $w_\mathrm{s}$. 
Starting from the initialization $\mathbf{w}_{r,l}^{(0)}$, the algorithm updates the current solution iteratively. 
To avoid convergence to a local minimum, the local search can be conducted from multiple feasible initializations, thereby exploring different feasible regions of the pruning-width space. 
At iteration $t$, let $\mathbf{w}_{r,l}^{(t)}$ denote the current pruning-width vector. 
{The feasible neighboring candidates are generated by transferring the width step size $\Delta w$ between two paths while keeping the total budget unchanged.
The two paths are selected by enumerating all feasible path pairs.
Let $\mathbf e_i,\mathbf e_j\in\mathbb Z^P_+$ denote the unit vectors of path $i$ and path $j$, respectively, each having a single nonzero entry equal to one at its path index.
For each feasible pair, the neighboring candidate is generated as
$\mathbf w_{r,l}^{\rm cand}=\mathbf w_{r,l}^{(t)}+\Delta w\mathbf e_i-\Delta w\mathbf e_j.$}
Based on $\mathbf{w}_{r,l}^{\mathrm{cand}}$, the sparsified ECM 
$\tilde{\mathbf{H}}_{\mathrm{eff}}(\mathbf{w}_{r,l}^{\mathrm{cand}})$ is 
constructed via \eqref{eq:sparECM}, and the corresponding cost 
$P_\mathrm{A}(c_1,\mathbf{w}_{r,l}^{\mathrm{cand}})$ is evaluated. 
If the candidate yields a lower cost than the current solution, it is accepted and the current solution is updated accordingly; otherwise, the 
current solution remains unchanged. Namely, the update rule is given by
\begin{equation}
    \mathbf{w}_{r,l}^{(t+1)}=
    \begin{cases}
        \mathbf{w}_{r,l}^{\mathrm{cand}}, & \text{if } 
        P_\mathrm{A}(c_1, \mathbf{w}_{r,l}^{\mathrm{cand}}) < 
        P_\mathrm{A}(c_1, \mathbf{w}_{r,l}^{(t)}), \\
        \mathbf{w}_{r,l}^{(t)}, & \text{otherwise.}
    \end{cases}
\end{equation}
The iteration continues until no improving feasible neighbor exists or the 
maximum number of iterations is reached.

After all pairs $(r,l)$ have been explored, the final solution is selected as
\begin{equation}
    \left(c_1^\star, \mathbf{w}^\star\right) = 
    \arg\min_{(r,l)}\, P_\mathrm{A}\!\left(c_{1,r}, 
    \mathbf{w}_{r,l}^{\star}\right),
\end{equation}
where $\mathbf{w}_{r,l}^{\star}$ is the locally optimized vector for the 
$(r,l)$-th initialization. 

{It should be emphasized that $P_\mathrm{A}(c_1,\mathbf w)$ is a discrete nonconvex objective with respect to $\mathbf w$. Therefore, the inner-loop local search does not claim global optimality for a single initialization. Instead, HS-JCPS employs a multi-start local-search structure: for each candidate $c_1$, multiple feasible initial pruning-width vectors ${\mathbf w_{r,l}^{(0)}}$ are generated in the outer loop, and the inner local search is independently launched from each initialization. The final solution is selected as the best locally optimized result among all explored candidate $c_1$ values and feasible initializations.}
The proposed HS-JCPS algorithm is summarized in 
Algorithm~\ref{alg:opt_sparse_ecm}, which efficiently approximates the 
optimal pair $(c_1, \mathbf{w})$ without resorting to exhaustive search.

\begin{algorithm}[t]
\caption{HS-JCPS Strategy}
\label{alg:opt_sparse_ecm}
\renewcommand{\algorithmicrequire}{\textbf{Input:}}
\renewcommand{\algorithmicensure}{\textbf{Output:}}
\begin{algorithmic}[1]
\REQUIRE Channel matrices $\{\mathbf{H}_i\}_{i=1}^{P}$, band width $d$, maximum iteration number $\mathrm{Iter}_{\max}$, step size $\Delta w$;
\ENSURE Near-optimal pair $\left(c_1^\star,\mathbf{w}^\star\right)$, sparsified ECM $\widetilde{\mathbf{H}}_{\mathrm{eff}}^\star$, and minimum cost $P_\mathrm{A}^\star$;

\STATE Initialize $P_\mathrm{A}^\star \leftarrow +\infty$, $\mathbf{w}_{r,l}^{(0)}$;
\FOR{each candidate $c_{1,r}$ satisfying $2Nc_{1,r}\le r_{\max}$}
    \STATE Calculate the peak locations $\{\xi_i\}_{i=1}^{P}$;
    \FOR{each feasible initialization indexed by $l$}
        \STATE Set $t\leftarrow 0$;
        \REPEAT
            \STATE Set $\mathbf{w}_{r,l}^{(t+1)} \leftarrow \mathbf{w}_{r,l}^{(t)}$, $\mathrm{flag}\leftarrow 0$;
            \FOR{each feasible neighboring candidate $\mathbf{w}_{r,l}^{\mathrm{cand}}$}
                \STATE Construct $\widetilde{\mathbf{H}}_{\mathrm{eff}}\!\left(\mathbf{w}_{r,l}^{\mathrm{cand}}\right)$ according to \eqref{eq:sparECM};
                \STATE Evaluate $P_\mathrm{A}\!\left(c_{1,r},\mathbf{w}_{r,l}^{\mathrm{cand}}\right)$ using \eqref{eq:SINRk} and \eqref{eq:Calber};
                \IF{$P_\mathrm{A}\!\left(c_{1,r},\mathbf{w}_{r,l}^{\mathrm{cand}}\right) < P_\mathrm{A}\!\left(c_{1,r},\mathbf{w}_{r,l}^{(t)}\right)$}
                    \STATE $\mathbf{w}_{r,l}^{(t+1)} \leftarrow \mathbf{w}_{r,l}^{\mathrm{cand}}$, $\mathrm{flag}\leftarrow 1$;
                \ENDIF
            \ENDFOR
            \STATE $t\leftarrow t+1$;
        \UNTIL{$\mathrm{flag}=0$ \OR $t=\mathrm{Iter}_{\max}$}
        \STATE Set $\mathbf{w}_{r,l}^{\star}\leftarrow \mathbf{w}_{r,l}^{(t)}$;
        \IF{$P_\mathrm{A}\!\left(c_{1,r},\mathbf{w}_{r,l}^{\star}\right)<P_\mathrm{A}^\star$}
            \STATE $P_\mathrm{A}^\star \leftarrow P_\mathrm{A}\!\left(c_{1,r},\mathbf{w}_{r,l}^{\star}\right)$, $c_1^\star \leftarrow c_{1,r}$, $\mathbf{w}^\star \leftarrow \mathbf{w}_{r,l}^{\star}$;
            \STATE $\widetilde{\mathbf{H}}_{\mathrm{eff}}^\star \leftarrow \widetilde{\mathbf{H}}_{\mathrm{eff}}\!\left(\mathbf{w}_{r,l}^{\star}\right)$;
        \ENDIF
    \ENDFOR
\ENDFOR
\RETURN $c_1^\star,\mathbf{w}^\star,\widetilde{\mathbf{H}}_{\mathrm{eff}}^\star,P_\mathrm{A}^\star$
\end{algorithmic}
\end{algorithm}

\begin{table*}[t]
	\centering
	\caption{Online Complexity Analysis of Sparsity-Exploiting LC-BMMSE Receiver}
	\label{tab:ComplexityAnaly}
	\begin{tabular}{|c|c|c|c|c|}
		\hline
		Algorithm & $\tilde{\boldsymbol{\Psi}}^{-1}\bar{\mathbf{y}}$ & $\mathbf{P}^\mathcal{T}\tilde{\mathbf{y}}$ and $\mathbf{P} \bar{\mathbf{s}}$ &$\tilde{\mathbf{H}}_{\mathrm{eff}}^\mathcal{H}\mathbf{y}$ &  Online Total Number of Complex Multiplications \\
		\hline
		Conventional MMSE & $\frac{4}{3}N^3$ &  $/$ &$N^2$ & $\frac{4}{3}N^3+N^2$  \\
		\hline
		LC-BMMSE & $N {\tilde{d}}^2 + 2 N {\tilde{d}}$ & $0$ &$Nd$ & $N {\tilde{d}}^2 + 2 N {\tilde{d}}+Nd$ \\
		\hline
	\end{tabular}
\end{table*}

\subsection{Complexity Analysis}
In this subsection, we analyze the computational complexity of the proposed sparsity-exploiting LC-BMMSE receiver and the proposed HS-JCPS algorithm.

\paragraph{The Complexity of Sparsity-Exploiting LC-BMMSE Receiver}
{The computational cost includes offline preprocessing and online receiver processing. The RCM-based $\mathbf P$ is obtained offline from the sparsity pattern of $\mathbf S$, independent of instantaneous channel coefficients, and can be reused under the same sparsity pattern.
Since the corresponding graph contains $N$ vertices and at most $O(N\tilde{d})$ adjacency entries, the RCM procedure requires $O(N\tilde{d})$ graph-traversal and index operations, while involving no complex multiplications. 
We next analyze the online receiver-processing complexity.}
Since $\mathbf{P}$ is a permutation matrix, the operations $\mathbf{P}\tilde{\mathbf{y}}$ and $\mathbf{P}^\mathcal{T}\bar{\mathbf{s}}$ involve only index reordering and do not require complex multiplications. 
Due to the sparsified ECM, computing $\tilde{\mathbf{H}}_{\mathrm{eff}}^\mathcal{H}\mathbf{y}$ requires only $Nd$ complex multiplications, instead of $N^2$. 
As shown in \eqref{eq:Gamma}, $\boldsymbol{\Gamma}$ is banded with a maximum bandwidth of $2\tilde{d}$, and its in-place banded Cholesky factorization $\boldsymbol{\Gamma}=\mathbf{R}\mathbf{R}^\mathcal{H}$ requires approximately $N\tilde{d}^{\,2}$ complex multiplications. 
Then, the forward and backward substitution requires at most $2N\tilde{d}$ complex multiplications, since each row of $\mathbf{R}$ contains no more than $\tilde{d}$ nonzero off-diagonal elements. 
Therefore, the online total complexity of the proposed sparsity-exploiting LC-BMMSE receiver is upper-bounded by $N\tilde{d}^{\,2}+2N\tilde{d}+Nd$.
By contrast, a conventional full-matrix MMSE receiver based on the dense ECM requires either dense Cholesky factorization or explicit matrix inversion, both incurring approximately $\frac{4}{3}N^3$ complex multiplications, plus an additional $N^2$ cost for matrix multiplication. 
The resulting complexity comparison is summarized in Table~\ref{tab:ComplexityAnaly}.

\paragraph{The Complexity of HS-JCPS}
Let $\mathrm{Iter}_{\max}$ and $C_\mathrm{eval}$ denote the number of local-search iterations and the cost of constructing the sparsified ECM and evaluating $P_\mathrm{A}(c_1,\mathbf{w})$, respectively.
For a given $c_{1,r}$ in the discrete candidate set, the number of feasible initializations can be approximated by $L_r\approx \left\lfloor \frac{d-(\xi_P-\xi_1)}{\Delta w}\right\rfloor+1$.
In each local-search iteration, a neighboring candidate is generated by increasing the width of one path by $\Delta w$ and decreasing that of another path, while keeping the total width budget unchanged. 
Hence, the number of feasible neighboring candidates is on the order of $\mathcal{O}(P^2)$. 
Therefore, the complexity of HS-JCPS can be approximated as $\mathcal{O}\!\left(r_{\max}\frac{d}{\Delta w}\mathrm{Iter}_{\max}P^2C_\mathrm{eval}\right)$.
By contrast, exhaustive search requires evaluating all feasible width allocations under the budget constraint in \eqref{eq:ws}. 
Let $Q=\frac{w_\mathrm{s}}{\Delta w}$ denote the total number of discrete width units to be allocated among the $P$ paths.
Then, its complexity is $\mathcal O\!\left( r_{\max}\binom{Q+P-1}{P-1}C_{\mathrm{eval}}    \right)$.
This shows that the exhaustive search grows combinatorially with the path number $P$ and the width budget $Q$, whereas the proposed HS-JCPS replaces the combinatorial search over $\mathbf{w}$ with a local search initialized from a limited set of feasible starting points.

\begin{table}
	\centering
	\caption{AFDM Simulation Parameter Settings}
	\label{Tab:AFDMParaSetting}
	\begin{tabular}{|c|c|c|}
		\hline
		Parameter & Symbol & Value \\
		\hline
		Modulation & / & 4-QAM \\
		\hline
		Subcarrier spacing & $\Delta f$ & $15$ kHz\\
		\hline
		Carrier frequency & $f_\mathrm{c}$ & $9$ GHz \\
		\hline
		Number of subcarriers & $N$ & $256$ \\
		\hline
		Maximum Doppler shift & $f_\mathrm{max}$ & $4.17$ kHz \\ 
		\hline
		Maximum iteration number & $\mathrm{Iter}_\mathrm{max}$ & $15$ \\
		\hline
		Normalized maximum delay & $l_\mathrm{max}$ & $6$ \\
		\hline
		Average powers & $\sigma_i^2$ & $\{0,-4.675,-6.482\}$ dB \\
		\hline
		AFDM parameter & $c_2$ & $1/2N$ \\
		\hline
	\end{tabular}
\end{table}

\section{Numerical Results}
\label{sec6}
In this section, numerical results are presented to evaluate the proposed FCDA and HS-JCPS strategies, as well as the performance of the sparsity-exploiting LC-BMMSE receiver. 
We refer to the $3$-tap NTN-TDL-A channel model specified in TR38.811~\cite{3gpp_tr38811}, where the average powers are $\{0,-4.675,-6.482\} \ \mathrm{dB}$.
The maximum normalized  delay is set as $l_\mathrm{max}=6$.
For each tap, the integer delay is randomly generated in the range $\left[0,l_\mathrm{max}\right]$ and the Doppler frequency offset is generated according to the maximum Doppler shift $f_\mathrm{max}=\frac{v}{c}f_\mathrm{c}$, while the Doppler spectrum follows the classical Jakes model over $[-f_\mathrm{max},f_\mathrm{max}]$~\cite{ref16}. 
The remaining system parameters are summarized in Table~\ref{Tab:AFDMParaSetting}.

\begin{figure}[t]
	\centering
	\includegraphics[width = 1\columnwidth]{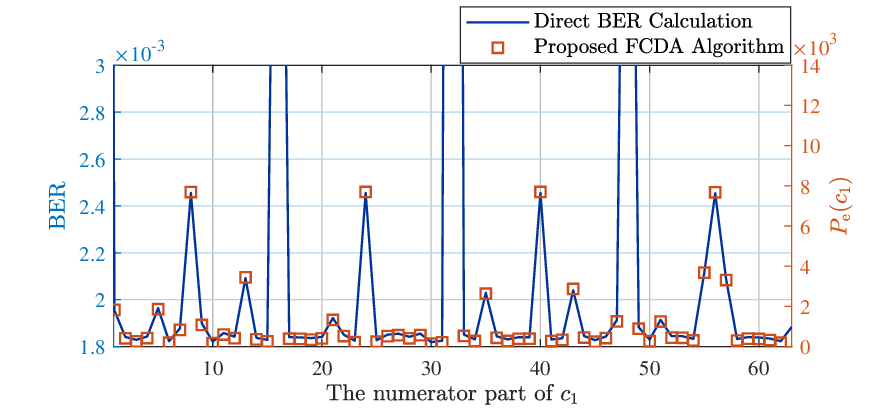}
	\caption{Comparison of BER and the value of $P_\mathrm{e}(c_1)$ over different $c_1$, demonstrating the effectiveness of the FCDA algorithm for identifying the BER-optimal $c_1$.}
	\label{fig:GeneralOptc1_offdiagEnergy}
\end{figure}

\begin{figure}[t]
	\centering
	\includegraphics[width = 1\columnwidth]{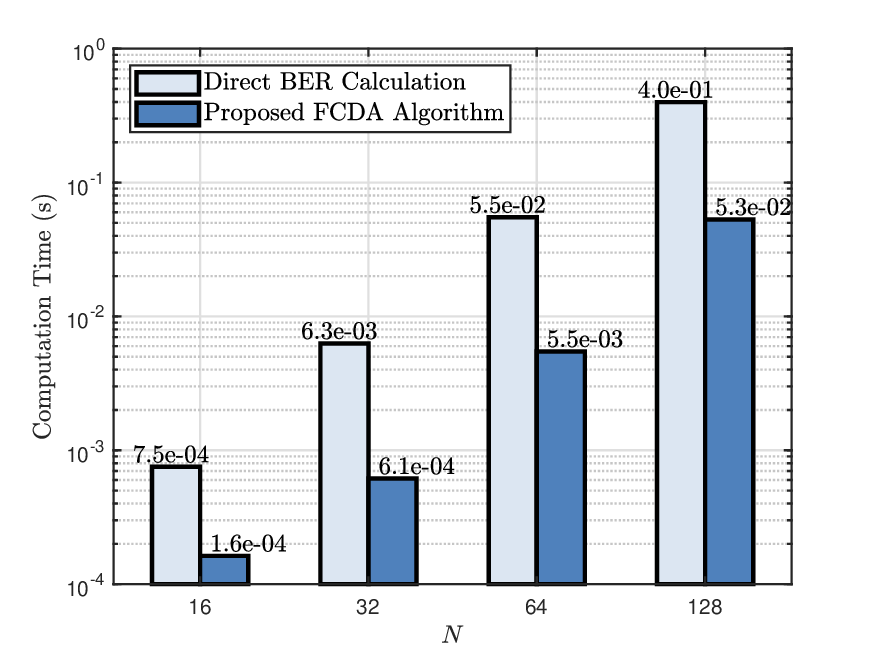}
	\caption{Comparison of computation time between direct BER calculation and the proposed FCDA algorithm for searching all candidate $c_1$ values.}
	\label{fig:Opttime}
\end{figure}

Fig.~\ref{fig:GeneralOptc1_offdiagEnergy} validates the effectiveness of the proposed FCDA algorithm for selecting the AFDM parameter $c_1$. 
{As shown in Fig.~\ref{fig:GeneralOptc1_offdiagEnergy}, $P_\mathrm{e}(c_1)$ follows the variation of the actual BER across different $c_1$ values, with both attaining their minimum at the same $c_1$.
This confirms that the simplified BER metric serves as an
effective and computationally efficient metric for $c_1$ selection.}
Furthermore, Fig.~\ref{fig:Opttime} compares the computational time required by direct BER calculation based on \eqref{eq:Calber} and the proposed FCDA  for optimal $c_1$ selection. 
Both methods were implemented in the same environment under identical settings. 
It can be seen that FCDA reduces the search 
time by approximately one order of magnitude, demonstrating a clear 
computational advantage over direct BER evaluation.
\begin{figure}[t]
	\centering
	\includegraphics[width = 1\columnwidth]{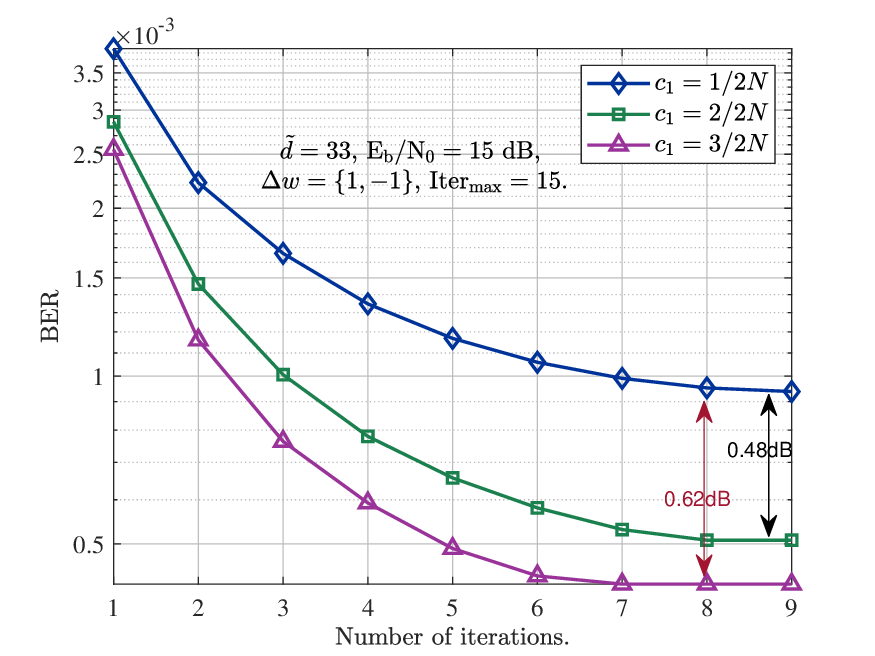}
	\caption{Comparison of BER convergence under different $c_1$ values of  the proposed HS-JCPS strategy.}
	\label{fig:BERiterforHeuAlgo}
\end{figure}

Fig.~\ref{fig:BERiterforHeuAlgo} illustrates the BER convergence behavior 
of the proposed HS-JCPS algorithm under different values of $c_1$, with 
the maximum number of nonzero elements per row and the pruning-width budget 
are set to $\tilde{d} = 33$ and $w_\mathrm{s} = 16$, respectively. Three 
candidate values are considered: $c_1 = 1/2N$, $2/2N$, and $3/2N$. For 
each fixed $c_1$, the inner heuristic search progressively refines the 
pruning-width vector $\mathbf{w}$ through step-wise updates, leading to a lower   BER. In addition,  compared with $c_1 = 1/2N$, the cases 
$c_1 = 2/2N$ and $c_1 = 3/2N$ achieve BER gains of $0.48\,\mathrm{dB}$ 
and $0.62\,\mathrm{dB}$, respectively. Overall, Fig.~\ref{fig:BERiterforHeuAlgo}  shows that     
the proposed HS-JCPS algorithm efficiently explores the discrete search 
space and converges to a near-optimal sparse design within a given 
complexity budget.

\begin{figure}[t]
	\centering
	\includegraphics[width = 1\columnwidth]{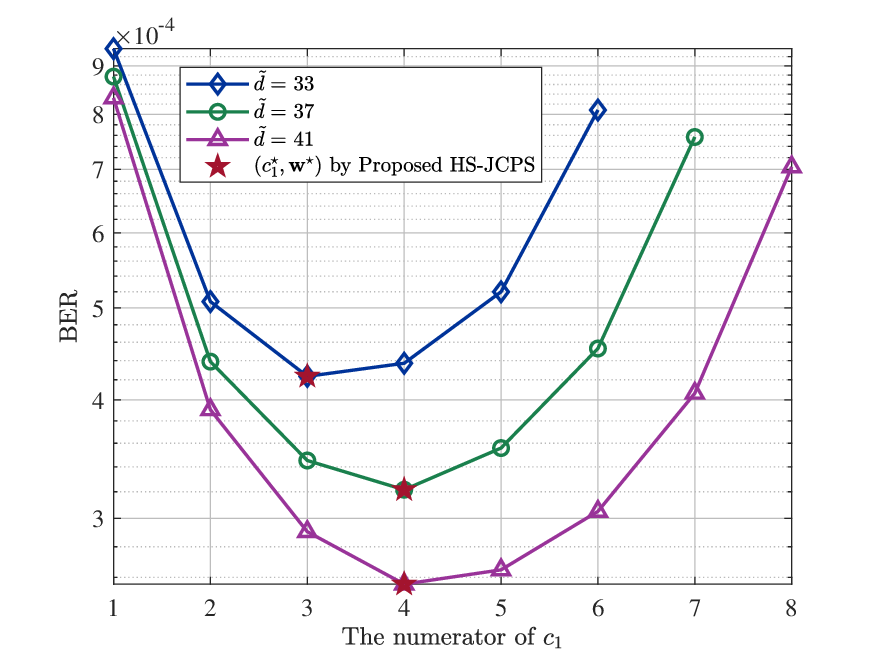}
	\caption{BER comparisons of different candidate pairs $(c_1,\mathbf{w})$ under different maximum numbers of nonzero elements $\tilde{d}$ with $\mathrm{E}_\mathrm{b}/\mathrm{N}_0=15\,\mathrm{dB}$.}
	\label{fig:BERforc1_Diffwidth}
\end{figure}

Fig.~\ref{fig:BERforc1_Diffwidth} compares the BER  of  different candidate pairs $(c_1,\mathbf{w})$ with different values of $\tilde{d}$, i.e., the number of non-zero entries among rows of $\mathbf{S}$ at $\mathrm{E}_\mathrm{b}/\mathrm{N}_0=15\,\mathrm{dB}$. 
For each $\tilde{d}$, the BER does not decrease monotonically with increasing $c_1$, but instead exhibits a first-decreasing-then-increasing trend. 
This is because the receiver only processes the entries of band region located near the main diagonal of sparsified regularized matrix $\tilde{\boldsymbol{\Psi}}$. 
As $c_1$ further increases, the dominant paths may shift significantly beyond the processing range of the receiver, so that their energy can no longer be effectively captured, which in turn leads to performance degradation.
{ The steep slopes around the optimum are caused by the structural sensitivity of the sparsified ECM to $c_1$. Under a fixed truncation width, changing $c_1$ shifts the dominant coupling center through $\xi_i=(\alpha_i+2Nc_1l_i)_N$, so even a small mismatch may move significant path energy outside the retained interval in \eqref{eq:purning}. 
The resulting increase in $\Delta\mathbf H_{\rm eff}$ strengthens the residual interference term $\mathbf D$ in Lemma~\ref{lemma1} and leads to sharp BER degradation.}
This further indicates that MMSE performance is highly sensitive to the 
joint choice of the AFDM chirp parameter and the per-path pruning widths. 
For $\tilde{d} = 33$, the minimum BER is achieved at $c_1 = \frac{3}{2N}$, 
whereas for $\tilde{d} = 37$ the optimum shifts to $c_1 = \frac{4}{2N}$. 
Moreover, although the optimal $c_1$ remains the same for both $\tilde{d} 
= 37$ and $\tilde{d} = 41$, a larger $\tilde{d}$ yields a further BER 
reduction, since a looser complexity budget retains more effective coupling 
terms and provides greater design freedom for joint $(c_1, \mathbf{w})$ 
optimization. These results highlight that the {near-optimal pair} 
$(c_1^\star, \mathbf{w}^\star)$ varies with $\tilde{d}$, underscoring the 
necessity of complexity-aware joint parameter selection.

\begin{figure}[t]
	\centering
	\includegraphics[width = 1\columnwidth]{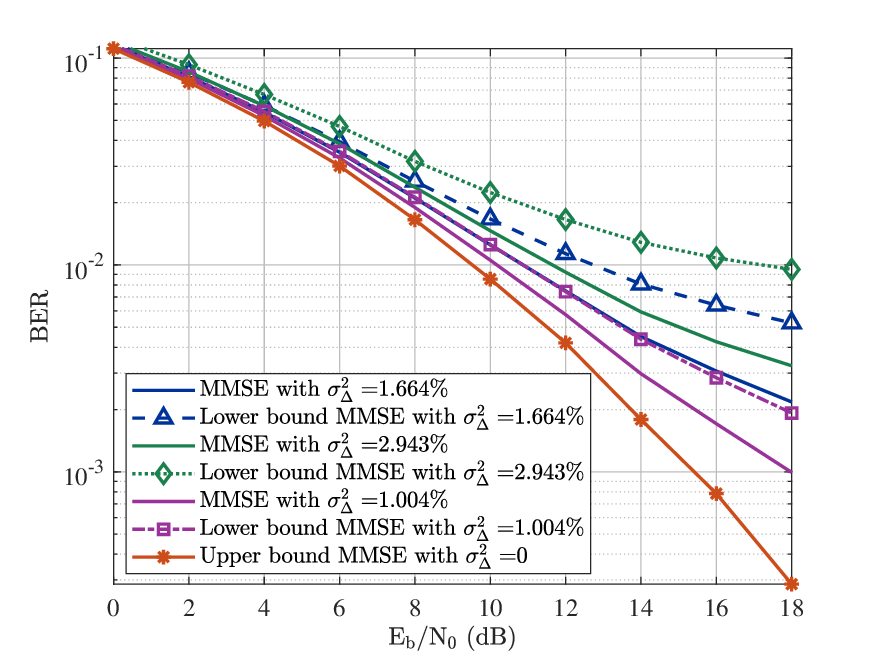}
	\caption{\footnotesize Comparison between MMSE reception and the corresponding upper and lower bounds under different $\sigma_\Delta^2$.}
	\label{fig:EVAforMMSE_LB}
\end{figure}

Fig.~\ref{fig:EVAforMMSE_LB} compares the BER performance of the MMSE detector under different sparsification perturbation levels $\sigma_\Delta^2$. 
The curves labeled ``upper bound'' correspond to full-ECM MMSE detection 
without sparsification, i.e., $\sigma_\Delta^2 = 0$, while the curves labeled ``lower bound''  denote the proposed  bound  given by \eqref{eq:SINRbound}.
As $\sigma_\Delta^2$ increases, the lower bound becomes looser, which is consistent with the fact that a larger sparsification perturbation 
introduces greater mismatch between the full and sparsified effective channel models. 
Nevertheless, minimizing the proposed lower bound remains an effective means of improving the BER performance.

\begin{figure}[t]
	\centering
	\includegraphics[width = 1\columnwidth]{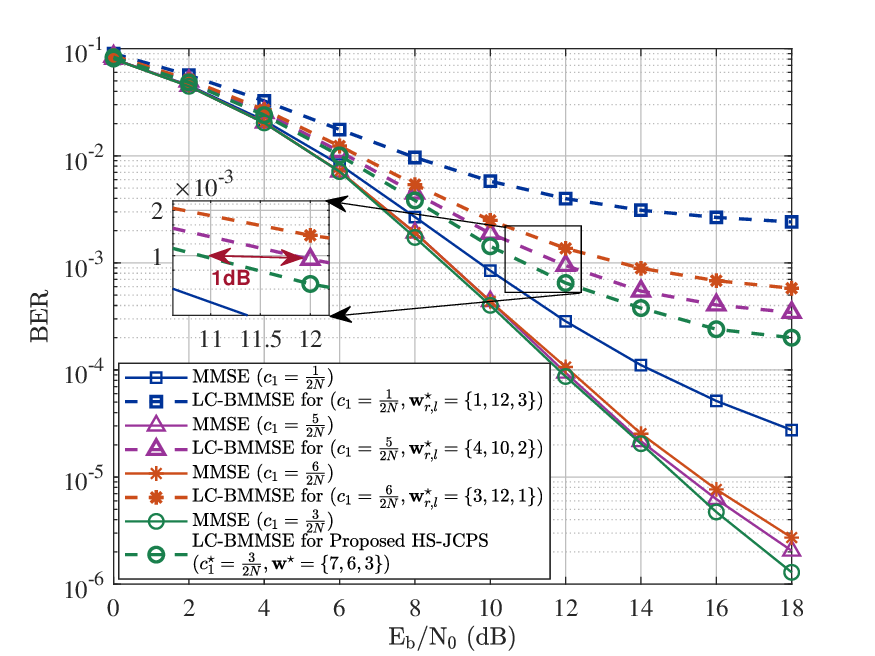}
	\caption{BER performance of the proposed sparsity-exploiting LC-BMMSE receiver versus $\mathrm{E}_\mathrm{b}/\mathrm{N}_0$ for 4-QAM under different sparsification levels of the ECM.}
	\label{fig:BERforSparse}
\end{figure}

Fig.~\ref{fig:BERforSparse} illustrates the BER performance of the proposed HS-JCPS algorithm with the LC-BMMSE receiver. 
Under the pruning-width budget $w_\mathrm{s}=16$ and the DAFT-domain truncation width $d=16$, the BER performance is evaluated versus the noise level. 
The $\mathbf{w}_{r,l}^\star$ values shown in the legend for the LC-BMMSE receiver denote the optimal sparsification patterns identified by the proposed HS-JCPS algorithm for a given candidate $c_1$, while $\mathbf{w}^\star$ denotes {the best sparsification pattern found by the proposed HS-JCPS} over all candidate $c_1$ values.
These results confirm that, under fixed complexity and pruning-width constraints, the proposed HS-JCPS algorithm can identify a near-optimal $c_1$ and pruning width $\mathbf{w}$, achieving an approximately $1$ dB performance gain at a BER of $10^{-3}$.

\begin{figure}[t]
	\centering
	\includegraphics[width = 1\columnwidth]{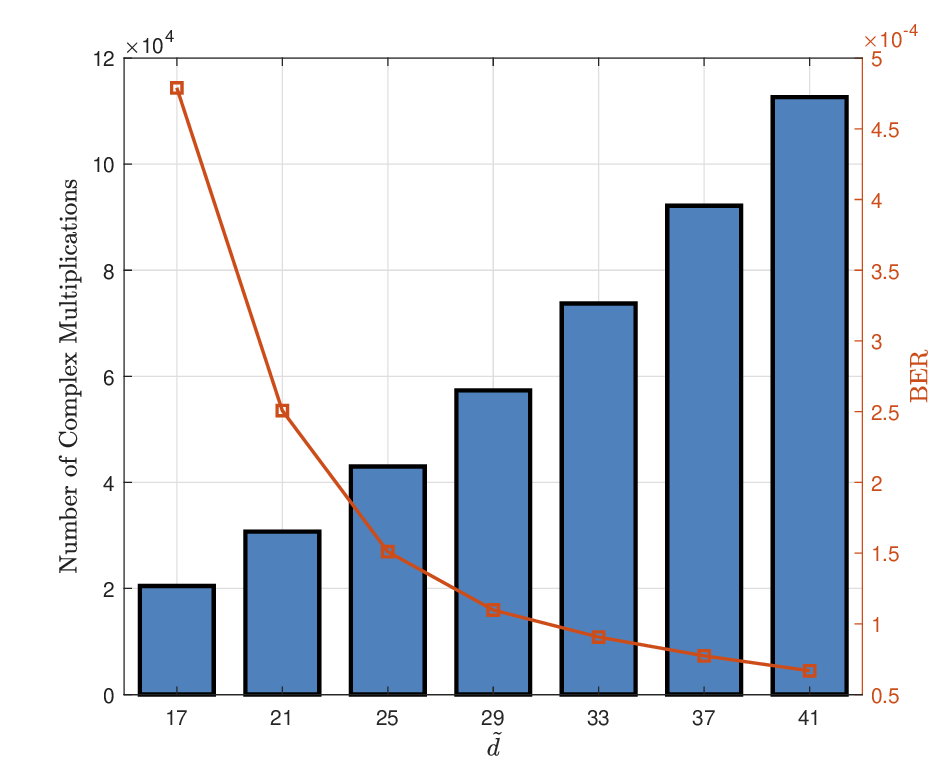}
	\caption{BER-complexity tradeoff under different $\tilde{d}$.}
	\label{fig:BERforComplexwithOptSel}
\end{figure}

Fig.~\ref{fig:BERforComplexwithOptSel} shows the BER-complexity tradeoff under different DAFT-domain truncation width $\tilde{d}$. 
The bar chart illustrates the receiver-side computational cost in terms of the number of complex multiplications, while the curve shows the corresponding BER performance. 
As $\tilde{d}$ increases from $17$ to $41$, the computational complexity rises monotonically from about $2\times10^{4}$ to about $1.1\times10^{5}$ complex multiplications. 
This is because a larger $\tilde{d}$ preserves a wider banded support of the effective model and therefore involves more nonzero coupling terms in detection. 
Moreover, the BER decreases significantly from about $4.8\times10^{-4}$ to about $0.7\times10^{-4}$, indicating that a wider retained neighborhood captures more dominant interference components and improves MMSE detection accuracy. 
However, once $\tilde{d}$ exceeds approximately $33$, the BER improvement becomes marginal whereas the complexity continues to increase. 
Therefore, a moderate $\tilde{d}$ provides a favorable operating point that achieves near-saturated BER performance under a manageable complexity budget.

\begin{figure}[t]
	\centering
	\includegraphics[width = 1\columnwidth]{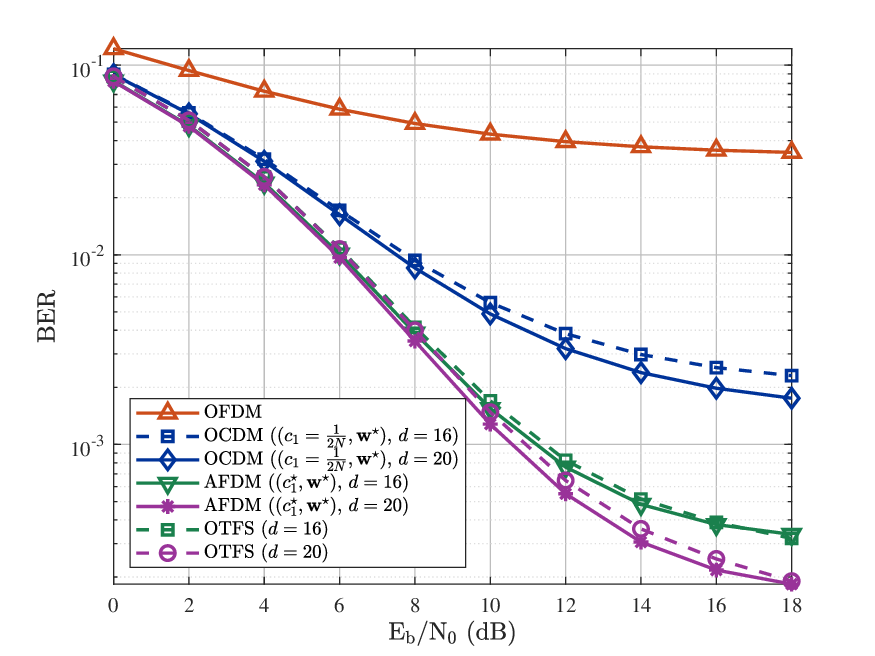}
	\caption{Comparison of BER performance of LC-BMMSE receivers for AFDM, OCDM, OTFS, and OFDM.}
	\label{fig:fig_DiffwaveComp}
\end{figure}

Finally, Fig.~\ref{fig:fig_DiffwaveComp} compares the BER performance of the LC-BMMSE receiver with the proposed HS-JCPS algorithm for different waveforms.
{For a fair comparison, all compared waveforms use the same guard overhead and assume perfect receiver CSI, so pilot overhead and channel estimation errors are not included in Fig.~\ref{fig:fig_DiffwaveComp}.}
Four waveforms are considered, namely, AFDM, OCDM (corresponding  to $c_1 =\frac{1}{2N}$), OFDM (corresponding to $c_1 = 0$) and OTFS, under two complexity constraints with $d=16$  and $d=20$.
The results show that AFDM outperforms both OFDM and OCDM in terms of BER and is comparable to OTFS.
Moreover, owing to its tunable waveform parameter, AFDM offers greater room for performance improvement than OFDM and OCDM, especially when a larger $d$ is allowed.

{To evaluate fractional-Doppler robustness, Fig.~13 shows the BER versus the maximum fractional Doppler shift at $\mathrm{E}_\mathrm{b}/\mathrm{N}_0=15$ dB. 
As the maximum fractional Doppler shift increases, the path energy spreads over wider off-diagonal regions in the DAFT domain, increasing the sparse-ECM truncation error and hence the BER. 
At small the maximum fractional Doppler shift, most channel energy remains within the retained bands, leading to similar performance for $d=16$ and $d=20$. 
Therefore, the configuration with a wider retained support provides greater robustness against fractional Doppler at the cost of increased receiver complexity.}

\begin{figure}[t]
	\centering
	\includegraphics[width = 1\columnwidth]{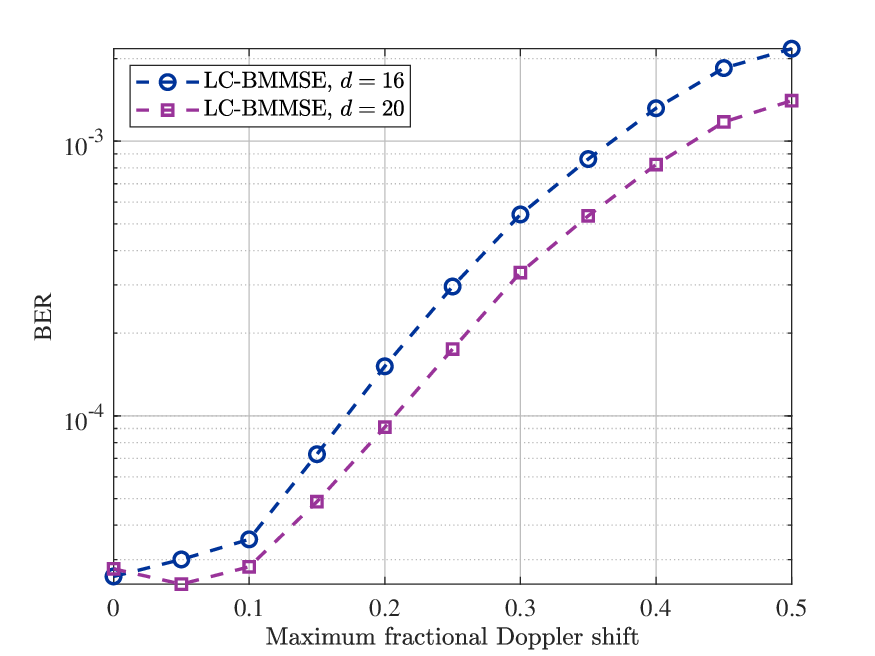}
	\caption{BER performance of the proposed LC-BMMSE receiver versus the maximum fractional Doppler shift under $\mathrm{E}_\mathrm{b}/\mathrm{N}_0=15$ dB.}
	\label{fig:BERforMaxmiumDoppler}
\end{figure}

\section{Conclusion}
\label{sec7}
In this paper, we investigated the joint optimization of chirp-parameter selection and low-complexity MMSE receiver design for AFDM systems. Specifically, we first derived a closed-form BER metric to facilitate performance evaluation under general MMSE detection, based on which an FCDA algorithm was developed for efficient near-optimal chirp-parameter selection. Then, by exploiting the structural characteristics and inherent sparsity of the AFDM ECM, we developed an LC-BMMSE receiver based on reverse Cuthill-McKee (RCM) reordering, cyclic-banded ECM construction, path-wise structured sparsification, and banded Cholesky factorization. {The proposed LC-BMMSE receiver achieves the near-optimal performance with respect to the designed sparse ECM model, while serving as a low-complexity approximation to the full-ECM MMSE receiver.} Finally, to further determine the optimal $c_1$ value under the LC-BMMSE receiver, we derived a lower bound on the MMSE performance associated with the sparsified ECM and proposed an efficient HS-JCPS for joint chirp-parameter and structured-sparsification optimization. Overall, the proposed joint design framework demonstrates that properly integrating chirp-parameter optimization with structure-aware low-complexity MMSE reception can substantially improve the performance-complexity tradeoff of AFDM systems.

\appendices
\section{Derivation of Lower Bound for MMSE estimator}\label{appendixA}

We know that due to the ECM mismatch caused by the sparsification operation, \eqref{eq:SINR_MMSE} is no longer applicable to the accurate calculation of the SINR of the signal processed by the MMSE receiver.
The SINR is calculated using \eqref{eq:SINRk}, resulting in
\begin{equation}
	\mathsmaller{\operatorname{SINR}_k=\frac{\mathrm{E}_\mathrm{s}\left\lvert \left[\tilde{\mathbf{M}}\right]_{k,k}\right\rvert^2}{\mathrm{E}_\mathrm{s}\sum_{j\neq k}\left\lvert \left[\tilde{\mathbf{M}}\right]_{k,j}\right\rvert^2 + \sigma_\mathrm{n}^2\left[\tilde{\boldsymbol{\Psi}}^{-1}\tilde{\mathbf{G}}\tilde{\boldsymbol{\Psi}}^{-1}\right]_{k,k}}.}
	\label{eq:SINR_krew1}
\end{equation}
\eqref{eq:SINR_krew1} shows that the sum of the expected signal power and interference power of the $k$th subcarrier is the $k$th diagonal element of $\tilde{\mathbf{M}}\tilde{\mathbf{M}}^\mathcal{H}$, i.e., $\mathsmaller{\left[\tilde{\mathbf{M}}\tilde{\mathbf{M}}^\mathcal{H}\right]_{k,k}=\left[\tilde{\mathbf{M}}\right]_{k,k}+\sum_{j\neq k}\left\lvert \left[\tilde{\mathbf{M}}\right]_{k,j}\right\rvert^2}$.
$\tilde{\mathbf{M}}\tilde{\mathbf{M}}^\mathcal{H}$ can be derived as
\begin{align}
	\begin{aligned}
		\tilde{\mathbf{M}}\tilde{\mathbf{M}}^\mathcal{H}   =& \ \tilde{\mathbf{M}}\left(\tilde{\boldsymbol{\Psi}}^{-1}\tilde{\mathbf{G}}+\mathbf{D}\right)^\mathcal{H}\\
		= & \left(\tilde{\boldsymbol{\Psi}}^{-1}\tilde{\mathbf{G}}+\mathbf{D}\right) \left(\tilde{\boldsymbol{\Psi}}^{-1}\tilde{\mathbf{G}}\right)^\mathcal{H} +\tilde{\mathbf{M}}\mathbf{D}^\mathcal{H} \\
		= & \left(\tilde{\boldsymbol{\Psi}}^{-1}\tilde{\mathbf{G}}\right)\left(\tilde{\boldsymbol{\Psi}}^{-1}\tilde{\mathbf{G}}\right)^\mathcal{H}+\mathbf{D}\left(\tilde{\boldsymbol{\Psi}}^{-1}\tilde{\mathbf{G}}\right)^\mathcal{H} +\tilde{\mathbf{M}}\mathbf{D}^\mathcal{H}.
		\label{eq:MM}
	\end{aligned}
\end{align}
Multiplying both sides of equation $\tilde{\boldsymbol{\Psi}}^{-1}\tilde{\mathbf{G}}+\frac{\sigma_\mathrm{n}^2+\sigma_\Delta^2}{\mathrm{E}_\mathrm{s}}\tilde{\boldsymbol{\Psi}}^{-1}=\mathbf{I}_N$ by $\tilde{\boldsymbol{\Psi}}^{-1}\tilde{\mathbf{G}}$ and combining this with $\tilde{\mathbf{M}}=\tilde{\boldsymbol{\Psi}}^{-1}\tilde{\mathbf{G}}+\tilde{\boldsymbol{\Psi}}^{-1}\tilde{\mathbf{H}}_\mathrm{eff}^\mathcal{H}\Delta\mathbf{H}_\mathrm{eff}$, we obtain
\begin{equation}
	\tilde{\mathbf{M}} = \tilde{\boldsymbol{\Psi}}^{-1}\tilde{\mathbf{G}} \tilde{\boldsymbol{\Psi}}^{-1}\tilde{\mathbf{G}}+\frac{\sigma_\mathrm{n}^2+\sigma_\Delta^2}{\mathrm{E}_\mathrm{s}}\tilde{\boldsymbol{\Psi}}^{-1}\tilde{\mathbf{G}}\tilde{\boldsymbol{\Psi}}^{-1}+\tilde{\boldsymbol{\Psi}}^{-1}\tilde{\mathbf{H}}_\mathrm{eff}^\mathcal{H}\Delta\mathbf{H}_\mathrm{eff}.
	\label{eq:eqM}
\end{equation}
It is worth noting that the matrix $\tilde{\boldsymbol{\Psi}}^{-1}\tilde{\mathbf{G}}$ is symmetric and satisfies $\tilde{\boldsymbol{\Psi}}^{-1}\tilde{\mathbf{G}}=\left(\tilde{\boldsymbol{\Psi}}^{-1}\tilde{\mathbf{G}}\right)^\mathcal{H}$.
Substituting \eqref{eq:eqM} into \eqref{eq:MM} and focusing only on the diagonal elements, we obtain
\begin{align}
	\begin{aligned}
		&\left[\tilde{\mathbf{M}}\tilde{\mathbf{M}}^\mathcal{H}\right]_{k,k}+\frac{\sigma_\mathrm{n}^2}{\mathrm{E}_\mathrm{s}}\left[\tilde{\boldsymbol{\Psi}}^{-1}\tilde{\mathbf{G}}\tilde{\boldsymbol{\Psi}}^{-1}\right]_{k,k} \\
		&\ =  \left[\tilde{\mathbf{M}}\right]_{k,k} + \left[\tilde{\mathbf{M}}\mathbf{D}^\mathcal{H}\right]_{k,k}-\frac{\sigma_\Delta^2}{\mathrm{E}_\mathrm{s}}\left[\tilde{\boldsymbol{\Psi}}^{-1}\tilde{\mathbf{G}}\tilde{\boldsymbol{\Psi}}^{-1}\right]_{k,k}\\
		&\qquad-\frac{\sigma_\mathrm{n}^2+\sigma_\Delta^2}{\mathrm{E}_\mathrm{s}}\left[\mathbf{D}\left(\tilde{\boldsymbol{\Psi}}^{-1}\right)^\mathcal{H}\right]_{k,k}.
		\label{eq:MMkk}
	\end{aligned}
\end{align}

\begin{proposition}
\label{proposition:1}
Given $\mathrm{E}_{\mathrm{s}}, \sigma_\Delta^2, \sigma_{\mathrm{n}}^2>0$, and $\tilde{\boldsymbol{\Psi}}$ is Hermitian positive definite, the following inequality holds:
\begin{equation}
0<
\frac{\mathsmaller{\sigma_\Delta^2}}{\mathsmaller{\mathrm{E}_\mathrm{s}}}
\left[\tilde{\boldsymbol{\Psi}}^{-1}\tilde{\mathbf{G}}\tilde{\boldsymbol{\Psi}}^{-1}\right]_{k,k}
+
\frac{\mathsmaller{\sigma_\mathrm{n}^2+\sigma_\Delta^2}}{\mathsmaller{\mathrm{E}_\mathrm{s}}}\left[\mathbf{D}\left(\tilde{\boldsymbol{\Psi}}^{-1}\right)^\mathcal{H}\right]_{k,k}<C(\sigma_\Delta),
\end{equation}
where $C(\sigma_\Delta)$ represents the upper bound, which is given by
\begin{equation}
C(\sigma_\Delta)=
\frac{\mathsmaller{\sigma_\Delta^2}}{\mathsmaller{\mathrm{E}_\mathrm{s}}}
\|\tilde{\boldsymbol{\Psi}}^{\mathsmaller{-1}}\|_F^2\|\tilde{\mathbf{G}}\|_F
+
\frac{\mathsmaller{\sqrt{N}\sigma_\Delta(\sigma_\mathrm{n}^2+\sigma_\Delta^2)}}{\mathsmaller{\mathrm{E}_\mathrm{s}}}\|\tilde{\boldsymbol{\Psi}}^{\mathsmaller{-1}}\|_F^2
\|\tilde{\mathbf{H}}_\mathrm{eff}\|_F.
\label{eq:C_fro}
\end{equation}
\end{proposition}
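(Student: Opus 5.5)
The plan is to treat the two summands separately. Each one is a single diagonal entry of a product of matrices, so each can be bounded in absolute value by the Frobenius norm of that product and then split using submultiplicativity. The lower bound $0<\cdot$ would come from positive semidefiniteness for the first summand; the second summand will need a separate argument.

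\emph{First summand.} $\tilde{\boldsymbol{\Psi}}$ is Hermitian positive definite, so $\tilde{\boldsymbol{\Psi}}^{-1}$ is Hermitian. Hence $\tilde{\boldsymbol{\Psi}}^{-1}\tilde{\mathbf{G}}\tilde{\boldsymbol{\Psi}}^{-1}=(\tilde{\mathbf{H}}_\mathrm{eff}\tilde{\boldsymbol{\Psi}}^{-1})^\mathcal{H}(\tilde{\mathbf{H}}_\mathrm{eff}\tilde{\boldsymbol{\Psi}}^{-1})$ is positive semidefinite. Its $k$th diagonal entry is therefore the real number $\|\tilde{\mathbf{H}}_\mathrm{eff}\tilde{\boldsymbol{\Psi}}^{-1}\mathbf{e}_k\|_2^2$, which is strictly positive whenever the $k$th column of $\tilde{\mathbf{H}}_\mathrm{eff}\tilde{\boldsymbol{\Psi}}^{-1}$ is nonzero. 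This holds generically, since $\tilde{\boldsymbol{\Psi}}^{-1}$ is invertible and the retained support of $\tilde{\mathbf{H}}_\mathrm{eff}$ is nonempty. For the upper estimate I would use $|[\mathbf{X}]_{k,k}|\le\|\mathbf{X}\|_F$ followed by $\|\mathbf{A}\mathbf{B}\mathbf{C}\|_F\le\|\mathbf{A}\|_F\|\mathbf{B}\|_F\|\mathbf{C}\|_F$. This gives $[\tilde{\boldsymbol{\Psi}}^{-1}\tilde{\mathbf{G}}\tilde{\boldsymbol{\Psi}}^{-1}]_{k,k}\le\|\tilde{\boldsymbol{\Psi}}^{-1}\|_F^2\|\tilde{\mathbf{G}}\|_F$, which is the first term of $C(\sigma_\Delta)$.

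\emph{Second summand.} Write $\mathbf{D}(\tilde{\boldsymbol{\Psi}}^{-1})^\mathcal{H}=\tilde{\boldsymbol{\Psi}}^{-1}\tilde{\mathbf{H}}_\mathrm{eff}^\mathcal{H}\Delta\mathbf{H}_\mathrm{eff}\tilde{\boldsymbol{\Psi}}^{-1}$. The same two inequalities give $|[\cdot]_{k,k}|\le\|\tilde{\boldsymbol{\Psi}}^{-1}\|_F^2\|\tilde{\mathbf{H}}_\mathrm{eff}\|_F\|\Delta\mathbf{H}_\mathrm{eff}\|_F$. By the definition of $\sigma_\Delta^2$ we have $\|\Delta\mathbf{H}_\mathrm{eff}\|_F=\sqrt{\operatorname{Tr}(\Delta\mathbf{H}_\mathrm{eff}\Delta\mathbf{H}_\mathrm{eff}^\mathcal{H})}=\sqrt{N}\sigma_\Delta$. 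Multiplying by $(\sigma_\mathrm{n}^2+\sigma_\Delta^2)/\mathrm{E}_\mathrm{s}$ reproduces the second term of \eqref{eq:C_fro}. Adding the two estimates gives the upper bound. Strictness follows because $|[\mathbf{X}]_{k,k}|=\|\mathbf{X}\|_F$ only when $\mathbf{X}$ has a single nonzero entry; that is incompatible with the structure of these products for $N>1$, and at least one of the inequalities is then strict.

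\emph{Main obstacle.} The lower bound $0<\cdot$ is the step I expect to be hardest. The entry $[\tilde{\boldsymbol{\Psi}}^{-1}\tilde{\mathbf{H}}_\mathrm{eff}^\mathcal{H}\Delta\mathbf{H}_\mathrm{eff}\tilde{\boldsymbol{\Psi}}^{-1}]_{k,k}=\mathbf{a}_k^\mathcal{H}\mathbf{b}_k$, with $\mathbf{a}_k=\tilde{\mathbf{H}}_\mathrm{eff}\tilde{\boldsymbol{\Psi}}^{-1}\mathbf{e}_k$ and $\mathbf{b}_k=\Delta\mathbf{H}_\mathrm{eff}\tilde{\boldsymbol{\Psi}}^{-1}\mathbf{e}_k$, is in general complex and of indefinite sign. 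I would first try to exploit the complementary supports of $\tilde{\mathbf{H}}_{\mathrm{eff},i}$ and $\Delta\mathbf{H}_{\mathrm{eff},i}$ to show that this cross term is small, specifically that it is $O(\sigma_\Delta)$ against the first summand's value $\|\mathbf{a}_k\|^2$. Cauchy--Schwarz gives $|\mathbf{a}_k^\mathcal{H}\mathbf{b}_k|\le\|\mathbf{a}_k\|\|\mathbf{b}_k\|$, so the sum has positive real part whenever $\sigma_\Delta^2\|\mathbf{a}_k\|>(\sigma_\mathrm{n}^2+\sigma_\Delta^2)\|\mathbf{b}_k\|$. If that cannot be made unconditional, I would state the positivity for the real part of the sum under this small-perturbation condition. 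This real part is the quantity that actually enters \eqref{eq:MMkk}, and only the upper bound $C(\sigma_\Delta)$ is needed to pass from \eqref{eq:MMkk} to \eqref{eq:SINRbound}.
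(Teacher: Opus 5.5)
Your upper bound is the paper's argument: bound the diagonal entry by the Frobenius norm of the product, split it by submultiplicativity, and use $\|\Delta\mathbf{H}_\mathrm{eff}\|_F=\sqrt{N}\sigma_\Delta$. You are also right that the second summand can be complex, so only its modulus or real part can be compared with $0$ and $C(\sigma_\Delta)$. For strictness, a cleaner route than the single-nonzero-entry remark is $\|\tilde{\boldsymbol{\Psi}}^{-1}\mathbf{B}\|_F\le\|\tilde{\boldsymbol{\Psi}}^{-1}\|_2\|\mathbf{B}\|_F<\|\tilde{\boldsymbol{\Psi}}^{-1}\|_F\|\mathbf{B}\|_F$ for $\mathbf{B}\neq\mathbf{0}$ and $N\ge 2$; this is strict as soon as $\tilde{\mathbf{H}}_\mathrm{eff}\neq\mathbf{0}$.

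\textbf{The gap is the lower bound $0<\cdot$.} Your fallback for it contains two errors. First, the scaling is inverted. The first summand is $\frac{\sigma_\Delta^2}{\mathrm{E}_\mathrm{s}}\|\mathbf{a}_k\|_2^2$, not $\|\mathbf{a}_k\|_2^2$. Meanwhile $\|\mathbf{b}_k\|_2\le\sqrt{N}\sigma_\Delta\|\tilde{\boldsymbol{\Psi}}^{-1}\mathbf{e}_k\|_2$, so the cross term is generically of order $(\sigma_\mathrm{n}^2+\sigma_\Delta^2)\sigma_\Delta/\mathrm{E}_\mathrm{s}$. With $\|\mathbf{b}_k\|_2$ proportional to $\sigma_\Delta$, your Cauchy--Schwarz condition becomes, up to constants, $\sigma_\Delta\|\mathbf{a}_k\|_2\gtrsim\sigma_\mathrm{n}^2+\sigma_\Delta^2$. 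That fails exactly in the small-perturbation regime $\sigma_\Delta\to 0$. There the sign of the sum is the sign of $\mathrm{Re}(\mathbf{a}_k^\mathcal{H}\mathbf{b}_k)$. Nothing forces this to be nonnegative, since $\tilde{\mathbf{H}}_\mathrm{eff}^\mathcal{H}\Delta\mathbf{H}_\mathrm{eff}$ is not Hermitian positive semidefinite. So showing the cross term is ``small'' cannot give positivity.

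Second, your claim that only $C(\sigma_\Delta)$ is needed downstream is backwards. Let $X_k$ be the quantity in the proposition. Then \eqref{eq:MMkk} reads $[\tilde{\mathbf{M}}\tilde{\mathbf{M}}^\mathcal{H}]_{k,k}+\frac{\sigma_\mathrm{n}^2}{\mathrm{E}_\mathrm{s}}[\tilde{\boldsymbol{\Psi}}^{-1}\tilde{\mathbf{G}}\tilde{\boldsymbol{\Psi}}^{-1}]_{k,k}=[\tilde{\mathbf{M}}]_{k,k}+[\tilde{\mathbf{M}}\mathbf{D}^\mathcal{H}]_{k,k}-X_k$. The inequality used to reach \eqref{eq:SINRbound} drops $X_k$, which requires $\mathrm{Re}\,X_k\ge 0$. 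The bound $X_k<C(\sigma_\Delta)$ only gives an inequality in the opposite direction, an SINR upper bound, and plays no role in Lemma~\ref{lemma1}. So positivity is the load-bearing part, and your proposal does not establish it.

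The paper does not establish it either. Its proof says the second term ``is very small when $\|\Delta\mathbf{H}_\mathrm{eff}\|_F\ll\|\tilde{\mathbf{H}}_\mathrm{eff}\|_F$, hence the sum is strictly positive.'' That is the same scaling oversight, because the first term is also small, of order $\sigma_\Delta^2$. The honest fix is to make positivity an explicit hypothesis. Your Cauchy--Schwarz inequality is a valid sufficient condition, but state it as an assumption on the channel and noise levels, not as a consequence of small perturbation. Lemma~\ref{lemma1} then inherits that hypothesis.
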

For detailed proof, refer to Appendix~\ref{ProofPro1}.

{
Proposition~\ref{proposition:1} indicates that the residual correction term induced by the SPM is upper-bounded by $C(\sigma_\Delta)$. Since $\Delta\mathbf H_{\rm eff}$ represents the discarded ECM component and enters the sparsified MMSE detector through $\mathbf D$, its average power $\sigma_\Delta^2$ quantifies the model mismatch caused by sparsification. Therefore, a larger $\sigma_\Delta^2$ produces stronger residual interference and a looser SINR lower bound, whereas retaining more ECM energy reduces $\sigma_\Delta^2$ and tightens the bound.
}
Applying Proposition~\ref{proposition:1} to \eqref{eq:MMkk}, we obtain the inequality as
\begin{align}
	\begin{aligned}
		\left[\tilde{\mathbf{M}}\tilde{\mathbf{M}}^\mathcal{H}\right]_{k,k}+\frac{\sigma_\mathrm{n}^2}{\mathrm{E}_\mathrm{s}}\left[\tilde{\boldsymbol{\Psi}}^{-1}\tilde{\mathbf{G}}\tilde{\boldsymbol{\Psi}}^{-1}\right]_{k,k}  \leq  \left[\tilde{\mathbf{M}}\right]_{k,k} + \left[\tilde{\mathbf{M}}\mathbf{D}^\mathcal{H}\right]_{k,k}.
	\end{aligned}
\end{align}

Therefore, the lower bound of SINR for MMSE is expressed as 
\begin{equation}
	\operatorname{SINR}_k \geq \frac{\left[\tilde{\mathbf{M}}\right]_{k,k}}{1+\frac{\left[\tilde{\mathbf{M}}\mathbf{D}^\mathcal{H}\right]_{k,k}}{\left[\tilde{\mathbf{M}}\right]_{k,k}}-\left[\tilde{\mathbf{M}}\right]_{k,k}}.
	\nonumber
\end{equation}

\section{ Proof of \textbf{Proposition}~\ref{proposition:1} }
\label{ProofPro1}
\begin{proof}
The first term $\left[\tilde{\boldsymbol{\Psi}}^{-1}\tilde{\mathbf{G}}\tilde{\boldsymbol{\Psi}}^{-1}\right]_{k,k}\ge 0$ for all $k$, since $\tilde{\mathbf{G}}=\tilde{\mathbf{H}}_\mathrm{eff}^\mathcal{H}\tilde{\mathbf{H}}_\mathrm{eff} \succeq \mathbf{0}$, and the matrix $\tilde{\boldsymbol{\Psi}}$ is Hermitian positive definite.
The second term represents a residual induced by $\Delta\mathbf{H}_\mathrm{eff}$, and is very small when $\lVert \Delta\mathbf{H}_\mathrm{eff}\rVert_F\ll \lVert\tilde{\mathbf{H}}_\mathrm{eff}\rVert_F$.
Hence, the sum is strictly positive.
Moreover, these two items satisfy
\begin{align}
	\begin{aligned}
		\frac{\mathsmaller{\sigma_\Delta^2}}{\mathsmaller{\mathrm{E}_\mathrm{s}}}	\left[\tilde{\boldsymbol{\Psi}}^{-1}\tilde{\mathbf{G}}\tilde{\boldsymbol{\Psi}}^{-1}\right]_{k,k}
	&<\frac{\mathsmaller{\sigma_\Delta^2}}{\mathsmaller{\mathrm{E}_\mathrm{s}}}\|\tilde{\boldsymbol{\Psi}}^{-1}\|_F^2\|\tilde{\mathbf{G}}\|_F,\\
	\frac{\mathsmaller{\sigma_\mathrm{n}^2+\sigma_\Delta^2}}{\mathsmaller{\mathrm{E}_\mathrm{s}}}\left[\mathbf{D}\left(\tilde{\boldsymbol{\Psi}}^{-1}\right)^\mathcal{H}\right]_{k,k} &\leq
	\|\tilde{\boldsymbol{\Psi}}^{-1}\tilde{\mathbf{H}}_\mathrm{eff}^\mathcal{H}
	\Delta\mathbf{H}_\mathrm{eff}
	\Delta\mathbf{I}^\mathcal{H}\|_F\\
	&\le	\frac{\mathsmaller{\sigma_\mathrm{n}^2+\sigma_\Delta^2}}{\mathsmaller{\mathrm{E}_\mathrm{s}}}\|\tilde{\boldsymbol{\Psi}}^{-1}\|_F^2
\|\tilde{\mathbf{H}}_\mathrm{eff}\|_F
\|\Delta\mathbf{H}_\mathrm{eff}\|_F \\
	&= \frac{\mathsmaller{\sigma_\mathrm{n}^2+\sigma_\Delta^2}}{\mathsmaller{\mathrm{E}_\mathrm{s}}}\sqrt{N}\sigma_\Delta\|\tilde{\boldsymbol{\Psi}}^{-1}\|_F^2
\|\tilde{\mathbf{H}}_\mathrm{eff}\|_F.
	\end{aligned}
\end{align}
Combining the above inequalities yields
\begin{equation}
\frac{\sigma_\Delta^2}{\mathrm{E}_\mathrm{s}}
\left[\tilde{\boldsymbol{\Psi}}^{-1}\tilde{\mathbf{G}}\tilde{\boldsymbol{\Psi}}^{-1}\right]_{k,k}
+
\left[\tilde{\boldsymbol{\Psi}}^{-1}\tilde{\mathbf{H}}_\mathrm{eff}^\mathcal{H}
\Delta\mathbf{H}_\mathrm{eff}
\Delta\mathbf{I}^\mathcal{H}\right]_{k,k}
<
C(\sigma_\Delta),
\end{equation}
which completes the proof.
\end{proof}

{\footnotesize
\bibliography{reference.bib}
}
\end{document}